\newtheorem{theorem}{Theorem}[section]
\newtheorem{proposition}[theorem]{Proposition}
\newtheorem{cor}[theorem]{Corollary}
\theoremstyle{remark}
\newtheorem{remark}[theorem]{Remark}
\theoremstyle{definition}
\newtheorem{definition}[theorem]{Definition}
\theoremstyle{example}
\newtheorem{example}[theorem]{Example}
\theoremstyle{notation}
\newcommand{\ket}[1]{|#1\rangle}
\begin{document}

\title{Exterior calculus and fermionic quantum computation}            
\author{A. Vourdas\\
Department of Computer Science\\University of Bradford\\ Bradford BD7 1DP, UK}

\begin{abstract}
Exterior calculus with its three operations meet, join  and hodge star complement, is used for the representation of fermion-hole systems and for fermionic analogues of 
logical gates. 
Two different schemes that implement fermionic quantum computation, are proposed.
The first scheme compares fermionic gates with Boolean gates, and leads to novel electronic devices that simulate fermionic gates. The second scheme uses
a well known map between fermionic and multi-qubit systems, to simulate fermionic gates within multi-qubit systems.
\end{abstract}

\maketitle

\section{ Introduction}

Exterior calculus  is based on Grassmann's progressive and regressive products\cite{G}.
Much of the literature uses exterior calculus with only one of its operations, the exterior product.
Rota and collaborators \cite{R0,R1,R2,R3} studied in detail exterior calculus 
with all its three operations in a modern pure mathematics context, and its applications to projective geometry.
This formalism is used in a physical context in this paper for a description of fermion-hole systems.

Exterior calculus is defined on a $d$-dimensional vector space $V(d)$ over the field $\mathbb C$ of complex numbers, endowed with 
sums of oriented volumes in $V(d)$ and its subspaces. 
In a physical context the oriented volumes are interpreted as Slater determinants, which describe a system of $d$ fermions.

There are three operations:
\begin{itemize}
\item
The meet operation describes how to combine two fermionic systems, into a larger fermionic system.
\item
The join operation joins systems of holes. 
\item
The Hodge star complement maps a system of fermions,
into the corresponding system of holes, and it changes
the meet (join) operation of two fermionic subsystems, into the join (meet) operation between the corresponding subsystems of holes.
\end{itemize}
These three operations are the fermionic analogues of the logical operations AND, OR, NOT in lattice theory.

Exterior calculus is not a lattice, it is a formalism based on determinants and both the concept of independence and also antisymmetry are deeply embedded in it.
The formalism expresses fermionic logic and provides the theoretical foundation for fermionic quantum computation\cite{K,K1}, in analogy to Boolean algebra 
which is the   theoretical foundation for classical computation, and to the Birkhoff-von Neumann lattice \cite{LO1,LO2,LO3,LO4,LO5,LO6} which is the  
 theoretical foundation for general quantum computation.

A given computational task might have different complexity within classical, bosonic or fermionic computation\cite{LL,LL1,LL2}.
Efficient computations require gates that perform tasks close to  the overall task.
Otherwise it is a complex and expensive process to reduce the overall task into many small tasks that can be performed by the gates.  
An algebraic structure that describes the logic of fermion-hole systems, and its experimental implementation,  is highly desirable
for computations that involve fermions, 
For example, it might be helpful to current work on fermionic simulations for quantum chemistry (e.g., \cite{B,K2})
The fermion-hole formalism can also be useful in other areas, like condensed matter, physical electronics for semiconductors, etc.

In this paper we use in a physical context the full exterior calculus with its three operations.
Our emphasis is in the physical interpretation of the formalism.
Theorems which are rigorously proved in the mathematics literature\cite{R0,R1,R2,R3}  are given here without proof, but they are interpreted physically in the context of fermionic systems.
We then suggest two different schemes for the implementation of fermionic quantum computation, as follows:
\begin{itemize}
\item
A comparison of fermionic gates with Boolean gates is made. This leads to novel electronic devices,
which are variants of the Boolean gates (AND, OR), and can be used for the simulation of fermionic logic. 
\item
There is a well known map between fermionic and multi-qubit systems.
We use this to introduce a fermionic (exterior calculus) logic in multi-qubit systems.
Fermionic gates might speed up computations for fermionic systems.
Implementation of fermionic gates with genuine fermionic systems might be a longer term task experimentally.
A fermionic logic formalism within multi-qubit systems (as discussed in section \ref{fermi}), might be easier to implement with existing technologies.
\end{itemize}

In section II we introduce the join operation and interpret it as the joining of systems of fermions.
We explain how Slater determinants, antisymmetry and the Pauli exclusion principle are embedded in the formalism.
In section III we introduce the Hodge star complement operation, and  interpret it as a map where fermions are replaced by holes, and holes by fermions.  
In section IV we introduce the meet operation and interpret it as the joining of systems of holes.
In section V we show how the scalar product is introduced within exterior calculus.

In section VI we compare and contrast fermionic gates with Boolean gates.
This leads to novel electronic devices,
which can be used for the implementation of fermionic logic. 
In section VII we show how we can simulate fermionic (exterior calculus) logic within multi-qubit systems.
We conclude in section VIII with a discussion of our results.

\section{The meet operation: joining states of fermions}

\subsection{Notation}
Rota and collaborators\cite{R0,R1,R2,R3} use the join notation $\vee$ for exterior products in oriented volumes, and the meet notation $\wedge$ for its dual operation.
Most of the literature uses the opposite notation, and this is what we adopt here: $\wedge$ for exterior product and $\vee$ for its dual operation defined later in section \ref{meet}.
We will interpret physically, the operation $\wedge $ as joining systems of fermions, and the operation $\vee$ as joining systems of holes.

Most of the literature uses only one operation, the exterior product (e.g., in geometry \cite{geo}).
Bourbaki \cite{BOU} uses mainly the exterior product (denoted as $\wedge$), and mentions briefly the dual operation in an exercise.
In this paper both of these operations play an equally important role, one in relation with fermions and the other in relation with holes.
This ensures equal treatment of both fermions and holes.
We also use the Hodge star complement, and show that these three operations together, describe the logic of fermionic systems. 

Holes are absent fermions in some of the available sites. We can interpret everything in terms of fermions or in terms of holes.
The analogue of this in Boolean algebra is that we can work with sets or equivalently with their complements.
For example, we can register the students who attend a lecture, or equivalently we can register the students who are absent from a lecture.

\subsection{Slater determinant for a system of $d$ fermions}

Let $x_1,...,x_d$ be vectors in $V(d)$ in a certain order. We use the notation $x_1\wedge ...\wedge x_d$ for the oriented volume associated with these vectors. In Grassmann's terminology this is the progressive product.
Cartan called them multivectors and discussed their properties\cite{cartan}.
In a physical context we will link them to (and refer to them as) Slater determinants.

If $e_1,...,e_d$ is an orthonormal basis in $V(d)$, then 
\begin{eqnarray}
X=x_1\wedge ...\wedge x_d=\det(x_1,...,x_d)e_1\wedge ...\wedge e_d.
\end{eqnarray}
Here $\det(x_1,...,x_d)$ is the determinant with columns the components of the vectors $x_1,...,x_d$ in the orthonormal basis $e_1,...,e_d$. 
More generally, if $A$ is a $d\times k$ matrix and $B$ is a $d\times (d-k)$ matrix, we will use the notation $\det (A,B)$ for the determinant of the combined $d\times d$ matrix $(A,B)$. 
The use of an orthonormal basis avoids the need for covariant and contravariant vectors, which do not add anything to the physics of the present paper.

The concept of linear independence is deeply embedded in exterior calculus.
$X$ is non-zero only if the vectors $x_1,...,x_d$ are linearly independent. 
We interpret $X$ as the Slater determinant of $d$ fermions located in the sites $r_1,...,r_d$,
which are taken to be fixed (e.g., fermionic lattice).
The vector $x_i$ is given by
\begin{eqnarray}
x_i=
\begin{pmatrix}
\psi_i(r_1)\\
\vdots\\
\psi_i(r_d)
\end{pmatrix}
\end{eqnarray} 
where $\psi _i(r_j)$ is the wavefunction of the $i$-fermion in the position $r_j$.
The meet operation makes Slater determinants for  a system of $d$ fermions (up to a normalization factor).

Let $\sigma$ be a permutation of the indices $1,...,d$ into $\sigma _1,...,\sigma_d$, with sign ${\rm sgn} (\sigma)$ (which is $-1$ for odd permutations, and $+1$ for even permutations). Then
\begin{eqnarray}
X=x_1\wedge ...\wedge x_d={\rm sgn} (\sigma) x_{\sigma _1}\wedge ...\wedge x_{\sigma _n}
\end{eqnarray} 
This and many other relations, are based on the usual properties of determinants.
An example is the multilinearity property:
\begin{eqnarray}
x_1\wedge ...\wedge (ax_i+by_i)\wedge...\wedge x_d=a (x_1\wedge ...\wedge x_i\wedge...\wedge x_d)+b (x_1\wedge ...\wedge y_i\wedge...\wedge x_d).
\end{eqnarray}

Below we will use the notation
\begin{eqnarray}
{\cal E}=e_1\wedge ...\wedge e_d.
\end{eqnarray}
${\cal E}$ is the hole vacuum, or equivalently all states in it are occupied by fermions.
$1$ is the fermionic vacuum, or equivalently all states in it are occupied by holes.

\subsection{Subsystems of fermions: Slater determinants in subspaces of $V(d)$ }
We consider a $k$-dimensional subspace $V(k)$ of $V(d)$, and an orthonormal basis $e_1,...,e_d$ such that the $e_1,...,e_k$ span the subspace $V(k)$.
Let $y_1,...,y_k$ be vectors in $V(k)$. The Slater determinant associated with these vectors is
\begin{eqnarray}\label{11}
Y=y_1\wedge ...\wedge y_k=\det(y_1,...,y_k)e_1\wedge ...\wedge e_k.
\end{eqnarray}
Here $\det(y_1,...,y_k)$ is the determinant with columns the $k$ components of the vectors $y_1,...,y_k$ in the orthonormal basis $e_1,...,e_k$.
We can also regard the $y_1,...,y_k$ as $d$-dimensional vectors in $V(d)$ with the components labelled with $k+1,...,d$ equal to zero.
Then Eq.(\ref{11}) can be written as
\begin{eqnarray}
Y=y_1\wedge ...\wedge y_k=\det(y_1,...,y_k,e_{k+1},...,e_d)e_1\wedge ...\wedge e_k.
\end{eqnarray}
We call Grassmann space of step $k$ and denote as $\bigwedge {^k}[V(d)]$ , the space generated by these Slater determinants. 
It describes subsystems of $k$ fermions, within a bigger system of $d$ fermions. Its elements are called tensors of step $k$.
Its dimension is 
$\begin{pmatrix} 
d\\
k\\
\end{pmatrix}$.

A tensor $x$ of step $k$ is called decomposable or extensor if there exist $k$ vectors  $x_1,...,x_k$ such that
$X=x_1\wedge ...\wedge x_k$ (it is a single Slater determinant). 
Cartan\cite{cartan} calls them `simple multivectors'.
Not all elements of $\bigwedge ^k[V(d)]$ are extensors (they are sums of many extensors).
The general element in $\bigwedge {^k}[V(d)]$ is
\begin{eqnarray}
X=\sum x(i_1,...,i_k)e_{i_1}\wedge ...\wedge e_{i_k};\;\;\;x(i_1,...,i_k)\in {\mathbb C}.
\end{eqnarray}
Here $\{i_1,...,i_k\}$ is any subset with cardinality $k$, of the set of indices $\{1,...,d\}$.
The vectors $e_i$ obey the relations
\begin{eqnarray}
e_i\wedge e_i=0;\;\;\;e_i\wedge e_j+e_j\wedge e_i=0.
\end{eqnarray}

\begin{example} 
$\bigwedge ^2[V(4)]$ describes subsystems of two fermions, within the four-dimensional space $V(4)$ which can accommodate a maximum of four fermions.
The $e_1\wedge e_2+e_3\wedge e_4$ is the sum of two Slater determinants describing the subsystem of two fermions in the states $1,2$, and the subsystem of two fermions in the states $3,4$.
\end{example}
We note here that ref.\cite{S} has introduced the concept of Slater rank in connection with entanglement \cite{S,S1} in fermionic systems. This counts the number of  Slater determinants in a sum.

The extensor $X=x_1\wedge ...\wedge x_k$ is associated with a $k$-dimensional subspace of $V(d)$ spanned by the vectors
$x_1,...,x_k$ which we denote as ${V}(k;X)$.
This notation indicates which $k$-dimensional subspace of $V(d)$ we consider.
\begin{example}
In $\bigwedge {^2}[V(3)]$ we consider the extensor
\begin{eqnarray}\label{79}
A=(\alpha e_1+\beta e_2)\wedge (\gamma e_2+\delta e_3)=
\alpha \gamma e_1\wedge e_2+\alpha \delta e_1\wedge e_3+\beta \delta e_2\wedge e_3.
\end{eqnarray}
It can be regarded as an oriented area in the plane $V(2;X)$ defined by the vectors
$(\alpha e_1+\beta e_2, \gamma e_2+\delta e_3)$.
Physically it is the superposition of the two fermion states $e_1\wedge e_2, e_1\wedge e_3, e_2\wedge e_3$ with the coefficients 
$\alpha \gamma, \alpha \delta, \beta \delta$, correspondingly.
\end{example}

\paragraph*{Independence, Pauli exlusion principle, and the physical meaning of zero result:} 
When the meet operation of several vectors is $0$, mathematically this means that the vectors are not independent.
Physically it means that the Slater determinant is zero, which reflects the fact that the Pauli exclusion principle is violated.
This is also true for the other operations introduced later.
Zero result means that this operation is physically impossible.
We stress that zero result does not mean that we get the fermionic vacuum.
The fermionic  vacuum is represented with a number in ${\mathbb C}\setminus \{0\}$, which can be normalized to $1$.

\subsection{The Grassmann space $\bigwedge[V(d)]$}

We might have a more general case, of sums of Slater determinants which belong to various Grassmann spaces with different steps $k$ (where $k\le d$).
In this case the number of fermions is not fixed.

We call Grassmann space the
\begin{eqnarray}\label{1}
\bigwedge [V(d)]=\bigoplus _{k=0}^d\bigwedge {^k}[V(d)];\;\;\;
\bigwedge {^0}[V(d)]={\mathbb C};\;\;\;\bigwedge {^1}[V(d)]=V(d).
\end{eqnarray}  
Its dimension is 
\begin{eqnarray}
\sum _{k=0}^d
\begin{pmatrix} 
d\\
k\\
\end{pmatrix}
=2^d.
\end{eqnarray}
The general element in $\bigwedge [V(d)]$ is
\begin{eqnarray}
A=\sum a(i_1,...,i_k)e_{i_1}\wedge ...\wedge e_{i_k};\;\;\;a(i_1,...,i_k)\in {\mathbb C},
\end{eqnarray}
where $\{i_1,...,i_k\}$ is any subset of the set of indices $\{1,...,d\}$.\paragraph*{Fock and Grassmann spaces:}
Let $F_k$ be the Fock space describing $k$ fermions. 
In particular $F_0$ contains the vacuum $\ket{0}$.
The full Fock space is  
\begin{eqnarray}\label{2V}
F=F_0\oplus F_1\oplus F_2 \oplus....
\end{eqnarray}
The Grassmann space $\bigwedge {^k}[V(d)]$ of step $k$, is analogous to $F_k$.
But $F_k$ is infinite dimensional, in contrast to 
$\bigwedge {^k}[V(d)]$ which is finite-dimensional.
Similarly the Grassmann space $\bigwedge[V(d)]$ is analogous to the full Fock space $F$. 
But the direct sum in Eq.(\ref{1}) is finite and the $\bigwedge[V(d)]$ is $2^d$-dimensional.
In contrast  the direct sum in Eq.(\ref{2V}) is infinite, and $F$ is infinite-dimensional.

\subsection{Superselection rules}
In some problems there are superselection rules which restrict the system into a subspace of the full Grassmann space.
An example is that the number of fermions modulo $2$ should have a definite value (either $0$ or $1$).
In this case the Grassmann space can be written as 
\begin{eqnarray}
\bigwedge [V(d)]=
\bigwedge ^{\rm even}[V(d)]\oplus
\bigwedge ^{\rm odd}[V(d)];\;\;\;
\bigwedge ^{\rm even}[V(d)]=\bigoplus _{k=0}^a\bigwedge {^{2k}}[V(d)];\;\;\;
\bigwedge ^{\rm odd}[V(d)]=\bigoplus _{k=0}^b\bigwedge {^{2k+1}}[V(d)].
\end{eqnarray}  
Here for odd $d$ we have $a=b=(d-1)/2$, and for even $d$ we have $a=d/2$ and $b=(d-2)/2$. 
The state of the system belongs to either the `odd Grassmann space' or to the `even Grassmann space'.
Superpositions of states in different subspaces are not allowed.

Other superposition rules (e.g., related to electric charge) might be relevant in some problems.
Below we consider the full Grassmann space, but we stress that in the presence of superselection rules, we need to work in the appropriate subspace.

\subsection{Properties of the meet operation}

Physically, $A\wedge B$ combines two fermionic systems, into a larger fermionic system.
The following two propositions are given without mathematical proof\cite{R0,R1,R2,R3}, but we interpret them physically.
\begin{proposition}\label{pro1}
Let ${\cal O}$ be the vector space that contains only the zero vector, and $A,B$ be extensors associated with the subspaces $V(k,A), V(\ell, B)$.
\begin{itemize}
\item
If  $V(k,A)\cap V(\ell, B)\ne {\cal O}$, then $A\wedge B=0$.

\item
If  $V(k,A)\cap V(\ell, B)= {\cal O}$, then $A\wedge B\ne 0$ and  the subspace associated with the extensor $A\wedge B$ is ${\rm span}[V(k,A)\cup V(\ell, B)]$.
\end{itemize}
\end{proposition}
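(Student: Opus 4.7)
The plan is to reduce both parts to the classical fact that $v_1 \wedge \cdots \wedge v_n = 0$ if and only if $v_1,\ldots,v_n$ are linearly dependent. First I would write $A = a_1 \wedge \cdots \wedge a_k$ and $B = b_1 \wedge \cdots \wedge b_\ell$, so that
\[
A \wedge B = a_1 \wedge \cdots \wedge a_k \wedge b_1 \wedge \cdots \wedge b_\ell.
\]
Because $A$ and $B$ are nonzero extensors, the sub-lists $(a_i)$ and $(b_j)$ are each linearly independent, spanning $V(k,A)$ and $V(\ell,B)$ respectively. The question therefore reduces to whether the concatenated list $(a_1,\ldots,a_k,b_1,\ldots,b_\ell)$ is independent.

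For the first bullet, I would take a nonzero $v \in V(k,A) \cap V(\ell,B)$ and write $v = \sum_i \alpha_i a_i = \sum_j \beta_j b_j$, which yields a nontrivial linear relation among the $a_i$ and $b_j$. Solving for some $a_{i_0}$ with $\alpha_{i_0}\neq 0$ and substituting into $A \wedge B$ using the multilinearity rule (equation~(4) in the excerpt) produces a sum of wedge products, each containing a repeated vector, so each term vanishes by the antisymmetry relation $e_i \wedge e_i = 0$. Hence $A \wedge B = 0$.

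For the second bullet, $V(k,A) \cap V(\ell,B) = \mathcal{O}$ means the sum $V(k,A) + V(\ell,B)$ is direct and has dimension $k+\ell$, so the concatenated list is linearly independent. Completing it to an orthonormal basis $e_1,\ldots,e_d$ of $V(d)$ and applying the determinant formula in equation~(1), adapted to the $(k+\ell)$-dimensional block spanned by these vectors, expresses $A \wedge B$ as a nonzero determinant times $e_1 \wedge \cdots \wedge e_{k+\ell}$. Consequently $A \wedge B \neq 0$, and by the definition of the subspace associated with an extensor it equals ${\rm span}\{a_1,\ldots,a_k,b_1,\ldots,b_\ell\} = {\rm span}[V(k,A) \cup V(\ell,B)]$.

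The main subtlety, though minor, is confirming that the subspace attached to an extensor is intrinsic, i.e.\ independent of the chosen wedge factorization; this follows from the standard fact that two extensors coincide up to a nonzero scalar precisely when their factor lists span the same subspace. Once this is in hand, both conclusions are direct consequences of multilinearity, antisymmetry, and the linear-dependence criterion, and no deeper machinery is required.
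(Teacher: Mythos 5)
The paper itself offers no proof of this proposition: it is stated as a known result from the exterior-algebra literature (Doubilet--Rota--Stein, Barnabei--Brini--Rota) and only interpreted physically, so there is no in-paper argument to compare against. Your proof is correct and is essentially the standard one: by associativity $A\wedge B=a_1\wedge\cdots\wedge a_k\wedge b_1\wedge\cdots\wedge b_\ell$, and everything reduces to the criterion that a wedge product of vectors vanishes if and only if the vectors are linearly dependent. A nonzero common vector forces a nontrivial relation on the concatenated list, giving the first bullet; trivial intersection makes $V(k,A)+V(\ell,B)$ a direct sum of dimension $k+\ell$, so the list is independent and the product is a nonzero extensor whose associated subspace is, by definition, the span of its factors, i.e.\ ${\rm span}[V(k,A)\cup V(\ell,B)]$. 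Two small points of hygiene: the relation you need to kill repeated factors is $x\wedge x=0$ for an \emph{arbitrary} vector $x$, not only the basis relation $e_i\wedge e_i=0$ quoted in the paper (it follows from antisymmetry together with multilinearity, or directly from the determinant definition); and in the second bullet the vectors $a_i,b_j$ need not be orthonormal, so one should choose an orthonormal basis of ${\rm span}\{a_i,b_j\}$ and extend it to $V(d)$, after which $A\wedge B$ equals the (nonzero, by independence) determinant of the coordinate matrix times $e_1\wedge\cdots\wedge e_{k+\ell}$. Your closing remark that the subspace attached to an extensor is independent of the chosen factorization is exactly the right subtlety to flag, since the second bullet's conclusion is meaningless without it.
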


\paragraph*{Physical interpretation:}
The above proposition states that we can join two fermionic subsystems into a larger fermionic system, only if they `live' in different 
non-overlapping subspaces of $V(d)$. Because in this case the two subsystems occupy different states, and respect the Pauli exclusion principle.

\begin{cor}
If $A$ is an extensor with ${\rm step}(A) \ge 1$ then  $A\wedge A=0$. Also $1\wedge 1=1$.
\end{cor}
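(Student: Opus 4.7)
The plan is to reduce both claims to short appeals to Proposition \ref{pro1} and to the definitions already in place; no new machinery is needed.

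For the first claim, I would let $A$ be an extensor with $k={\rm step}(A)\ge 1$, so that $A=x_1\wedge\ldots\wedge x_k$ for some linearly independent $x_1,\ldots,x_k$ and the associated subspace is $V(k,A)={\rm span}(x_1,\ldots,x_k)$. Since $k\ge 1$, the space $V(k,A)$ is nontrivial, hence $V(k,A)\cap V(k,A)=V(k,A)\ne {\cal O}$, and the first bullet of Proposition \ref{pro1}, applied to the pair $(A,A)$, immediately yields $A\wedge A=0$. As a sanity check one can also argue directly: the expression $A\wedge A$ is a wedge product of $2k$ vectors in which each $x_i$ occurs twice, and the antisymmetry identity $x\wedge x=0$ (which follows from $e_i\wedge e_i=0$ by expanding $x$ in the basis and using multilinearity) forces the product to vanish.

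For the second claim, the symbol $1$ denotes the fermionic vacuum living in $\bigwedge{^0}[V(d)]={\mathbb C}$, and on $\bigwedge{^0}[V(d)]$ the meet operation $\wedge$ reduces to ordinary multiplication of complex scalars, so $1\wedge 1=1\cdot 1=1$. The only point to watch is that Proposition \ref{pro1} is stated for extensors and that the hypothesis ${\rm step}(A)\ge 1$ is precisely what excludes the scalar case and guarantees $V(k,A)$ is large enough to make the self-intersection nontrivial; beyond this small bit of bookkeeping, there is no real obstacle.
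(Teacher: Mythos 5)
Your proof is correct and takes essentially the route the paper intends: the first claim is just the first bullet of Proposition \ref{pro1} applied to the pair $(A,A)$, since $V(k,A)\cap V(k,A)=V(k,A)\ne {\cal O}$ when $k\ge 1$, and the scalar case $1\wedge 1=1$ is the observation that $\wedge$ restricted to $\bigwedge{^0}[V(d)]={\mathbb C}$ is ordinary multiplication. The paper supplies only the physical reading (Pauli exclusion for $A\wedge A=0$, joining the vacuum with itself for $1\wedge 1=1$) in place of a formal argument, so your write-up is simply the explicit version of that derivation.
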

\paragraph*{Physical interpretation:}
We cannot join a fermionic system with itself because this violates the Pauli exclusion principle.
In $1\wedge 1=1$ we join the fermionic vacuum with itself.

\begin{proposition}
\begin{itemize}
\item[(1)]
The associativity property holds:
\begin{eqnarray}
A\wedge(B\wedge C)=(A\wedge B)\wedge C.
\end{eqnarray}
\item[(2)]
If $A\in \bigwedge {^k}[V(d)]$ and $B\in \bigwedge {^\ell}[V(d)]$ then 
\begin{eqnarray}\label{300}
A\wedge B=(-1)^{k\ell}B\wedge A.
\end{eqnarray}
\end{itemize}
\end{proposition}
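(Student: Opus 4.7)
The plan is to prove both parts by reducing, via multilinearity of the wedge product, to the case where $A$, $B$, and $C$ are basis extensors of the form $e_{i_1}\wedge\cdots\wedge e_{i_k}$, and then to argue purely from the defining relations $e_i\wedge e_i=0$ and $e_i\wedge e_j = -e_j\wedge e_i$ together with the permutation sign rule already recorded. Since every element of $\bigwedge[V(d)]$ is a linear combination of such basis extensors, and since both sides of each identity are bilinear (for (2)) or trilinear (for (1)) in $A,B,C$, it suffices to check the identities on these generators.

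For associativity, I would write $A = e_I := e_{i_1}\wedge\cdots\wedge e_{i_k}$, $B = e_J$, $C = e_L$ with the obvious multi-index notation. Both $A\wedge(B\wedge C)$ and $(A\wedge B)\wedge C$ are then, by iterated application of the definition of the wedge product of vectors, equal to the single oriented volume $e_{i_1}\wedge\cdots\wedge e_{i_k}\wedge e_{j_1}\wedge\cdots\wedge e_{j_\ell}\wedge e_{l_1}\wedge\cdots\wedge e_{l_m}$. If any index is repeated across the three multi-indices, both sides are zero by $e_i\wedge e_i = 0$; if not, both sides coincide as the same unparenthesised product. So associativity reduces to the observation that the wedge of a list of distinct basis vectors is a well-defined generator of $\bigwedge[V(d)]$ independent of bracketing.

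For graded commutativity, I would again take $A = e_{i_1}\wedge\cdots\wedge e_{i_k}$ and $B = e_{j_1}\wedge\cdots\wedge e_{j_\ell}$. If $\{i_1,\dots,i_k\}\cap\{j_1,\dots,j_\ell\}\ne\emptyset$, both sides vanish by Proposition~\ref{pro1} (equivalently by $e_i\wedge e_i=0$), and the identity is trivial. Otherwise I would count the sign of the permutation that transforms the concatenated list $(i_1,\dots,i_k,j_1,\dots,j_\ell)$ into $(j_1,\dots,j_\ell,i_1,\dots,i_k)$. This is done by sliding each $j_m$ leftwards past the $k$ entries $i_1,\dots,i_k$, which costs $k$ transpositions per $j_m$ and hence $k\ell$ transpositions in total; applying the anticommutation $e_p\wedge e_q = -e_q\wedge e_p$ once per transposition yields the overall sign $(-1)^{k\ell}$.

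The only delicate step, and the one that I expect to be the main obstacle to write cleanly, is the bookkeeping for the bilinearity reduction in part (2): one must verify that the sign $(-1)^{k\ell}$ depends only on the steps $k$ and $\ell$ and not on which particular basis extensors $e_I$, $e_J$ arise in the expansion, so that the identity survives the sum over these extensors. This is immediate from the argument above because every nonzero basis-extensor contribution produces exactly the same sign $(-1)^{k\ell}$, which then factors out of the whole expansion.
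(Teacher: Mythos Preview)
Your argument is correct and is the standard textbook proof of associativity and graded commutativity for the exterior product. Note, however, that the paper itself does not supply a proof of this proposition: it is explicitly stated there ``without mathematical proof'' and attributed to the references \cite{R0,R1,R2,R3}, with only a physical interpretation given afterwards. So there is no paper proof to compare against; your reduction to basis extensors via multilinearity, followed by the $k\ell$-transposition count, is exactly what one finds in standard treatments of exterior algebra and would serve perfectly well as the omitted proof.
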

\paragraph*{Physical interpretation:}Eq.(\ref{300}) expresses the antisymmetry property of fermions.

The following remark is useful for practical calculations of the join operation.
\begin{remark} 
If $e_1,...,e_d$ is an orthonormal  basis in $V(d)$ we consider the extensors
\begin{eqnarray}
A=e_{i_1}\wedge...\wedge e_{i_r};\;\;\;B=e_{j_1}\wedge...\wedge e_{j_s}.
\end{eqnarray}
where the $\{i_1,...,i_r\}$ and $\{j_1,...,j_s\}$ are subsets of the set of all indices $\{1,...,d\}$.
Then
\begin{itemize}
\item
if $\{i_1,...,i_r\}\cap \{j_1,...,j_s\}\ne \emptyset$ then $A\wedge B=0$ 
\item
if $\{i_1,...,i_r\}\cap \{j_1,...,j_s\}=\emptyset$ then $A\wedge B$ is given by 
\begin{eqnarray}
A\wedge B=\lambda \bigwedge _{k\in S}e_k;\;\;\;S=\{i_1,...,i_r\} \cup \{j_1,...,j_s\}
\end{eqnarray}
where $\lambda=\pm1$ depending on the order of the indices.
\end{itemize}
In particular for every extensor $A$ with ${\rm step}(A)\ge 1$:
\begin{eqnarray}
A\wedge {\cal E}=0.
\end{eqnarray}
Physically the fermions in ${\cal E}$ fill all the states in the $d$-dimensional space $V(d)$,
and there is no room compatible with the Pauli exclusion principle, for the extra fermions in $A$.
So the operation $A\wedge {\cal E}$ is physically impossible, and this the meaning of the zero result.
 If $A=1$ (the vacuum state), then
\begin{eqnarray}
 1\wedge {\cal E}={\cal E}.
\end{eqnarray}
Also for every extensor $A$ 
\begin{eqnarray}
A\wedge 1=A.
\end{eqnarray}
Physically, $1$ represents the fermionic vacuum  and if we join it with the system described by $A$, we get $A$.
These relations are summarized in table \ref{t1}.
\end{remark}

\begin{table}
\caption{Some relations for extensors $A,B$ in $\bigwedge [V(d)]$. The linearity property can be used to find analogous relations for more general tensors. $1$ is the fermionic vacuum and ${\cal E}$ is the vacuum for holes. If an operation gives the result $0$, it means that it is physically impossible because of the Pauli exclusion principle. }
\def\arraystretch{2}
\begin{tabular}{|c|c|}\hline
$A\wedge(B\wedge C)=(A\wedge B)\wedge C$&$A\vee(B\vee C)=(A\vee B)\vee C$\\\hline
$e_1\wedge ...\wedge e_d={\cal E}$&$e_1\vee ...\vee e_d=1$\\\hline
$A\in \bigwedge {^k}[V(d)]$;\;\;$B\in \bigwedge {^\ell}[V(d)]$&$A\in \bigwedge {^k}[V(d)]$;\;\;$B\in \bigwedge {^\ell}[V(d)]$\\
$A\wedge B=(-1)^{k\ell}B\wedge A$&$A\vee B=(-1)^{(d-k)(d-\ell)} B\vee A$\\\hline
$\star (A\wedge B)=(\star A)\vee (\star B)$&$\star(A\vee B)=(\star A) \wedge (\star B)$\\\hline
${\rm step}( A) \ge 1\;\rightarrow\; A\wedge {\cal E}=0$ &$A\vee {\cal E}=A$\\\hline
$A\wedge 1=A$&${\rm step}(A) \le d-1\;\rightarrow\;A\vee 1=0$\\\hline
$1\wedge {\cal E}={\cal E}$&$1\vee {\cal E}=1$\\\hline
${\rm step}( A) \ge 1\;\rightarrow\; A\wedge A=0$ &${\rm step}(A) \le d-1\;\rightarrow\;A\vee A=0$\\\hline
$1\wedge 1=1$&$1\vee 1=0$\\\hline
${\cal E}\wedge {\cal E}=0$&${\cal E}\vee {\cal E}={\cal E}$\\\hline
$A\in \bigwedge {^k}[V(d)]$;\;\;\;$B\in \bigwedge {^{d-k}}[V(d)]$&$A\in \bigwedge {^k}[V(d)]$;\;\;\;$B\in \bigwedge {^{d-k}}[V(d)]$\\
$A\wedge B=(A\vee B){\cal E}=\det(A,B){\cal E}$&$A\vee B=\det(A,B)$\\\hline
$A\wedge (\star A)=\det(A,\star A){\cal E}$&$A\vee (\star A)=\det (A,\star A)$\\\hline
\end{tabular} \label{t1}
\end{table}

\begin{example}
We consider $\bigwedge [V(2)]$ and the basis $1, e_1, e_2, {\cal E}$. 
There are two orthogonal states in $V(2)$, and: 
\begin{itemize}
\item
$1$ represents the fermionic vacuum;
\item
in $e_1$ one fermion occupies the first state; 
\item
in  $e_2$ one fermion occupies the second state;
\item
in ${\cal E}$ one fermion occupies the first state and a second fermion  the second state.
\end{itemize}
In table \ref{tq} we present the $A\wedge B$ where $A,B$ are extensors in this basis.
 The result $0$ indicates that this operation is physically impossible.
Using this table and the multilinearity property, we can easily find the $A\wedge B$ for all tensors in  $\bigwedge [V(2)]$.
\end{example}
\begin{table}
\caption{$A\wedge B$ and $A\vee B$, for $A,B \in \bigwedge [V(2)]$.
The result $0$ indicates that this operation is physically impossible because of the Pauli exclusion principle.
$A\wedge B$ can be viewed as an OR gate for fermions, or equivalently as an AND gate for holes.
$A\vee B$ can be viewed as an AND gate for fermions, or equivalently as an OR gate for holes.
A comparison with Boolean gates is discussed in section \ref{gates} and table \ref{tt}.}
\def\arraystretch{2}
\begin{tabular}
{|c|c|c|c|}\hline
$\;\;A\;\;$&$\;\;B\;\;$&$A\wedge B$&$A\vee B$\\\hline
$1$&$1$&$1$&$0$\\\hline
$1$&$e_1$&$e_1$&$0$\\\hline
$1$&$e_2$&$e_2$&$0$\\\hline
$1$&${\cal E}$&${\cal E}$&$1$\\\hline
$e_1$&$1$&$e_1$&$0$\\\hline
$e_1$&$e_1$&$0$&$0$\\\hline
$e_1$&$e_2$&${\cal E}$&$1$\\\hline
$e_1$&${\cal E}$&$0$&$e_1$\\\hline
$e_2$&$1$&$e_2$&$0$\\\hline
$e_2$&$e_1$&$-{\cal E}$&$-1$\\\hline
$e_2$&$e_2$&$0$&$0$\\\hline
$e_2$&${\cal E}$&$0$&$e_2$\\\hline
${\cal E}$&$1$&${\cal E}$&$1$\\\hline
${\cal E}$&$e_1$&$0$&$e_1$\\\hline
${\cal E}$&$e_2$&$0$&$e_2$\\\hline
${\cal E}$&${\cal E}$&$0$&${\cal E}$\\\hline
\end{tabular}\label{tq}
\end{table}
\begin{example}
We consider the $3$-dimensional space vector $V(3)$, and an orthonormal  basis $e_1, e_2, e_3$. Then:
\begin{itemize}
\item
$\bigwedge {^0}[V(3)]$ is the one-dimensional space $\mathbb C$. It describes the fermionic vacuum.
\item
$\bigwedge {^1}[V(3)]$ is the $3$-dimensional space generated by $e_1$, $e_2$, $e_3$, and describes one fermion states.
\item
$\bigwedge {^2}[V(3)]$ is the $3$-dimensional space generated by $e_1\wedge e_2$, $e_1\wedge e_3$, $e_2\wedge e_3$.
It contains oriented areas in planes within $V(3)$, which describe two fermion (antisymmetric) states.
\item
$\bigwedge {^3}[V(3)]$ is the one-dimensional space generated by ${\cal E}$, and describes three fermion (antisymmetric) states.
\end{itemize}
The exterior calculus space $\bigwedge [V(3)]$ is $8$-dimensional, with general element
\begin{eqnarray}
A=a_0+a_1e_1+a_2e_2+a_3e_3+a_4 e_1\wedge e_2+a_5e_1\wedge e_3+a_6 e_2\wedge e_3 +a_7{\cal E};\;\;\;a_i\in \mathbb C.
\end{eqnarray}

We next calculate the $X\wedge e_1$ where $X$ is given in Eq.(\ref{79}), and we find
$X\wedge e_1=\beta \delta {\cal E}$.
\end{example}

\subsection{Creation and annihilation operators for fermions and holes}
We relate briefly the above formalism with creation and annihilation operators.
$a_k$ are annihilation operators of fermions, or equivalently creation operators of holes.
$a_k^{\dagger}$ are creation operators of fermions, or equivalently annihilation operators of holes.
They obey the anticommutation relations:
\begin{eqnarray}
\{a_j, a_k^{\dagger}\}=\delta _{jk}{\bf 1};\;\;\;\{a_j^{\dagger}, a_k^{\dagger}\}=\{a_j, a_k\}=0;\;\;\;j,k \in {\cal I}=\{1,...,d\}
\end{eqnarray}
We consider the subset $S=\{i_1,...,i_k\}$ of the set of indices ${\cal I}$, and its complement $S^C={\cal I}\setminus S$.
The indices are in ascending order.
We then consider the operator $[{\cal A}(i_1,...,i_k)]^{\dagger}$ (${\cal A}(i_1,...,i_k)$) which 
has creation (annihilation) operators in the indicated places.
For example if $d=4$, then $[{\cal A}(1,4)]^{\dagger}=a_1^{\dagger}a_4^{\dagger}$ and ${\cal A}(1,4)=a _1a _4$.

The operator $[{\cal A}(i_1,...,i_k)]^{\dagger}$ maps the fermionic vacuum $1$, into the following system of $k$ fermions:
\begin{eqnarray}
[{\cal A}(i_1,...,i_k)]^{\dagger}\;:\;1\;\rightarrow\;\bigwedge _{j\in S}e_j
\end{eqnarray}
The operator ${\cal A}(i_1,...,i_k)$ maps the hole vacuum ${\cal E}$, into the following system of $k$ holes:
\begin{eqnarray}
{\cal A}(i_1,...,i_k)\;:\;{\cal E}\;\rightarrow\;\bigwedge _{j\in S^C} e_j.
\end{eqnarray}

\section{The Hodge star complement: fermions are mapped into holes}

\begin{definition}
Let $S=\{i_1,...,i_k\}$ be a subset of the set of all indices ${\cal I}=\{1,...,d\}$, and $S^C=
\{j_1,...,j_{d-k}\}$
its complement. The elements are ordered so that $i_1<...<i_k$ and $j_1<...<j_{d-k}$.
The Hodge star complement is denoted with $\star$, and is a bijective map from $\bigwedge^{k}[V(d)]$ onto $\bigwedge ^{d-k}[V(d)]$ 
(they have the same dimension), as follows:
\begin{eqnarray}\label{com}
\star \left [e_{i_1}\wedge ...\wedge e_{i_k}\right ]=(-1)^{i_1+...+i_k-k(k+1)/2}e_{j_1}\wedge ...\wedge e_{j_{d-k}}
\end{eqnarray}
For more general tensors in $\bigwedge[V(d)]$ the Hodge star complement is defined through the multi-linearity property.
\end{definition}
This map replaces fermions by holes (absent fermions) and holes by fermions.
For example the $e_{i_1}\wedge ...\wedge e_{i_k}$ has fermions labelled by the set $S$, and has no fermions (i.e., it has holes) labelled by the set $S^C$.
Its Hodge star complement $e_{j_1}\wedge ...\wedge e_{j_{d-k}}$ has fermions labelled by the set $S^C$, and has no fermions (i.e., it has holes) labelled by the set $S$.

We note that
\begin{eqnarray}
\star 1={\cal E};\;\;\;\star {\cal E}=1.
\end{eqnarray}
 Also
\begin{eqnarray}
\star (\star A)=(-1)^{k(d-k)}A.
\end{eqnarray}

\begin{remark}
The Hodge star complement is analogous to negation in Boolean algebra, in the sense that the holes are absences (`negations') of fermions.
\end{remark}
\begin{example}
In $ \bigwedge [V(2)]$:
\begin{eqnarray}
&&\star 1=e_1\wedge e_2={\cal E};\;\;\;\star e_1=e_2;\;\;\;\star e_2=-e_1;\;\;\;\star {\cal E}=1. 
\end{eqnarray}
\end{example}\begin{example}
In $ \bigwedge [V(3)]$:
\begin{eqnarray}
&&\star 1=e_1\wedge e_2\wedge e_3={\cal E};\;\;\;\star e_1=e_2\wedge e_3;\;\;\;\star e_2=-e_1\wedge e_3;\;\;\;\star e_3=e_1\wedge e_2\nonumber\\
&&\star (e_1\wedge e_2)=e_3;\;\;\;\star (e_1\wedge e_3)=-e_2;\;\;\star (e_2\wedge e_3)=e_1;\;\;\;\star {\cal E}=1. 
\end{eqnarray}
\end{example}

\subsection{Covectors: systems with one hole}
All tensors in  $\bigwedge ^{1}[V(d)]$ and $\bigwedge ^{d-1}[V(d)]$ are decomposable and they are called 
vectors and covectors, correspondingly.
Physically, vectors represent systems with one fermion.
Covectors represent systems with $d-1$ fermions, i.e., systems with one hole.
Let $\{e_1,...,e_d\}$ be an orthonormal  basis in $V(d)$.
We use the notation
\begin{eqnarray}\label{111}
{\mathfrak E} _i=\star e_i=(-1)^{i-1}e_1\wedge ...\wedge e_{i-1}\wedge e_{i+1}...\wedge e_d.
\end{eqnarray}
It is easily seen that
\begin{eqnarray}\label{112}
e_i\wedge (\star e_j)=e_i\wedge {\mathfrak E} _j=\delta _{ij} {\cal E},
\end{eqnarray} 
where $\delta _{ij}$ is Kronecker's delta.
Physically, the $e_i\wedge {\mathfrak E} _i={\cal E}$ means that an electron fills the appropriate hole in the one-hole system represented by ${\mathfrak E} _i$,
and we get the system ${\cal E}$ which is full of electrons.
In $e_i\wedge {\mathfrak E} _j=0$ with $i\ne j$, an electron tries to fill a position in the one-hole system which is already occupied, and this is not possible.

Later we rewrite the right hand side of Eq.(\ref{com}), in terms of covectors (see Eq.(\ref{comp})).

\section{The join operation: joining states of holes}\label{meet}

The definition below is based on a split of a matrix $M$ into a submatrix $M_1$ that contains some of its columns, and another submatrix $M_2$ that contains the rest of its columns.

\begin{definition}
Given an extensor $A=a_1\wedge ...\wedge a_k$ in $\bigwedge {^k}[V(d)]$, we partition the set of indices into two subsets as
\begin{eqnarray}
\{1,...,k\}=\{i_1,...,i_h\}\cup \{j_1,...,j_{k-h}\};\;\;\;0\le h\le k.
\end{eqnarray}
A split of $A$ of class $(h, k-h)$ is its representation as
\begin{eqnarray}\label{55c}
A={\rm sgn}(A_1,A_2)  A_1\wedge A_2, 
\end{eqnarray}
where $A_1, A_2$ are extensors in $\bigwedge {^h}[V(d)]$, $\bigwedge {^{k-h}}[V(d)]$, correspondingly.
Depending on the order of the indices, ${\rm sgn}(A_1,A_2) $ is $1$ or $-1$.
For given $k,h$, the set of all such splits is denoted as  ${\cal S}_A(h,k-h)$.
\end{definition}

\begin{definition}\label{51}
Given two extensors $A=a_1\wedge ...\wedge a_k \in \bigwedge {^k}[V(d)]$ and
$B=b_1\wedge ...\wedge b_\ell \in \bigwedge {^\ell }[V(d)]$ the join operation $A\vee B$ (regressive product in Grassmann's terminology) is defined as:
\begin{itemize}
\item
if $k+\ell<d$ then $A\vee B=0$
\item
if $k+\ell\ge d$ then 
\begin{eqnarray}\label{56}
A\vee B=\sum _{{\cal S}_A(d-\ell,k+\ell -d)}{\rm sgn}(A_1,A_2)\det(A_1,B)A_2 \in \bigwedge {^{k+\ell -d}}[V(d)],
\end{eqnarray}
or equivalently
\begin{eqnarray}\label{56a}
A\vee B=\sum _{{\cal S}_B(\ell+k-d, d-k)}{\rm sgn}(B_1,B_2)\det(A,B_2)B_1 \in \bigwedge {^{k+\ell -d}}[V(d)].
\end{eqnarray}
In the first sum we take all splits in the set ${\cal S}_A(d-\ell,k+\ell -d)$, and in the second sum all splits in the set ${\cal S}_B(\ell+k-d, d-k)$.
\end{itemize}
For more general tensors than extensors, the join is defined through the multi-linearity property.
\end{definition}
The equivalence of the two expressions in Eqs.(\ref{56}), (\ref{56a}) is proved in \cite{R1}
(as we explained earlier, the notation $\vee, \wedge$ in \cite{R0,R1,R2,R3} is replaced here by $\wedge, \vee$, correspondingly).
The physical interpretation of these equations is discussed later.

A special case is when $A\in \bigwedge {^k}[V(d)]$ and $B\in {\bigwedge ^{d-k}}[V(d)]$. In this case 
\begin{eqnarray}\label{35}
A\vee B=\det (A,B).
\end{eqnarray}
Physically the $\det (A,B)$ represents the fermionic vacuum.
In particular
\begin{eqnarray}
e_1\vee ...\vee e_d=1.
\end{eqnarray}
If  $A\in \bigwedge {^k}[V(d)]$ then $\star A\in {\bigwedge ^{d-k}}[V(d)]$, and 
\begin{eqnarray}
A\vee (\star A)=\det (A,\star A).
\end{eqnarray}

\subsection{Properties of the join operation}

The following propositions are given without proof\cite{R1}, but we interpret them physically.

\begin{proposition}
de Morgan's theorem in the present context states that:
\begin{eqnarray}\label{59}
\star(A\vee B)=(\star A) \wedge (\star B);\;\;\;\star (A\wedge B)=(\star A)\vee (\star B).
\end{eqnarray}
\end{proposition}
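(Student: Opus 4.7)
The plan is to reduce, by multilinearity of $\wedge$, $\vee$ and $\star$, to the case of basis extensors $A = e_I$, $B = e_J$ indexed by subsets $I, J \subseteq \{1,\ldots,d\}$ of cardinalities $k, \ell$ (with indices in ascending order). Since each side of either identity in (\ref{59}) is then either zero or a signed basis extensor, verification reduces to checking support and sign. I would prove $\star(A \wedge B) = (\star A) \vee (\star B)$ in detail; the other identity follows by substituting $A \to \star A$, $B \to \star B$ and applying $\star$ to both sides, absorbing the $(-1)^{k(d-k)}$ signs coming from $\star\star = (-1)^{k(d-k)}\mathrm{id}$ on $\bigwedge^k[V(d)]$.

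For the basis version, three cases arise. (a) If $k + \ell > d$, pigeonhole forces $I \cap J \neq \emptyset$, so $A \wedge B = 0$ by the remark on basis extensors; simultaneously ${\rm step}(\star A) + {\rm step}(\star B) = 2d - k - \ell < d$, killing $(\star A) \vee (\star B)$ by Definition \ref{51}. (b) If $k + \ell \leq d$ and $I \cap J \neq \emptyset$, the LHS vanishes; in the join formula (\ref{56}) applied to $\star e_I = \eta_I e_{I^c}$ and $\star e_J = \eta_J e_{J^c}$, the determinant $\det(A_1, \star e_J)$ in each term forces $A_1$ to be the basis extensor on the index set $J$, but $A_1 \subseteq I^c$ and $J \not\subseteq I^c$ together eliminate every term. (c) If $k + \ell \leq d$ and $I \cap J = \emptyset$, then $e_I \wedge e_J = \epsilon\, e_{I \cup J}$ for a reordering sign $\epsilon$, and the LHS is a scalar multiple of $e_{(I \cup J)^c}$; on the RHS, exactly one split of $\star e_I$ has first piece supported on $J \subseteq I^c$, and the remaining piece is supported on $I^c \setminus J = (I \cup J)^c$, again producing a scalar multiple of the same basis extensor.

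The main obstacle is the sign bookkeeping in case (c). Five signs must be reconciled: the reordering sign $\epsilon$ for $e_I \wedge e_J \to e_{I \cup J}$; the Hodge signs $\eta_I, \eta_J, \eta_{I \cup J}$ defined in (\ref{com}); the split sign ${\rm sgn}(A_1, A_2)$ from (\ref{55c}); and the determinantal sign of $\det(A_1, \star e_J)$. Each of these reduces to the parity of an explicit permutation on $\{1,\ldots,d\}$, so the identity boils down to a single permutation parity count. I would first verify the equality in the low-dimensional cases $d = 2, 3$ using the explicit data in Table \ref{tq} to anchor the sign conventions, and then extend to general $d$ by a uniform combinatorial argument expressing all five signs in terms of the number of inversions required to reorder the concatenation $(I, J, (I \cup J)^c)$ into the standard order $(1,\ldots,d)$.
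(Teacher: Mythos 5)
The paper does not prove this proposition at all: it is stated ``without proof'' with a citation to the exterior-algebra literature (Barnabei--Brini--Rota), so there is no internal argument to compare yours against. What you propose is the standard direct verification, and its architecture is sound. The reduction by multilinearity to basis extensors $e_I$, $e_J$ is exactly the right move, since $\wedge$, $\vee$ and $\star$ are all defined on the basis and extended linearly. Your cases (a) and (b) are complete and correct: in (a) both sides vanish for step reasons, and in (b) the left side vanishes because $I\cap J\ne\emptyset$ while the right side vanishes because $I^{c}\cup J^{c}=(I\cap J)^{c}\ne\{1,\dots,d\}$ (your determinant argument via Eq.~(\ref{56}) is a correct, if slightly longer, way of saying the same thing). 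The deduction of one de Morgan identity from the other is also legitimate; the sign you must absorb is $(-1)^{(2d-k-\ell)(k+\ell-d)}$ on one side against $(-1)^{k(d-k)+\ell(d-\ell)}$ on the other, and these do agree mod $2$ (both equal $(-1)^{(2d-k-\ell)(d-1)}$), though you assert rather than check this.

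The genuine shortfall is case (c), which is where the entire content of the proposition lives once the degenerate cases are cleared away: you establish that both sides are scalar multiples of the same basis extensor $e_{(I\cup J)^{c}}$, but the claim that the two scalars coincide is only announced as a plan (``a single permutation parity count'' to be anchored by $d=2,3$ examples). Nothing you write would fail, but nothing is yet proved either: the five signs you list ($\epsilon$, $\eta_{I}$, $\eta_{J}$, $\eta_{I\cup J}$ from Eq.~(\ref{com}), ${\rm sgn}(A_{1},A_{2})$, and the sign of $\det(A_{1},\star e_{J})$) must actually be combined and shown to cancel, and this is precisely the part of the argument that distinguishes a proof from a consistency check against Table~\ref{tq}. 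To close the gap, write $\eta_{I}=(-1)^{\sum_{i\in I}i-k(k+1)/2}$ explicitly for each of the three Hodge signs, observe that $\epsilon=(-1)^{m}$ where $m$ is the number of pairs $(i,j)\in I\times J$ with $i>j$, and do the same for the split and determinant signs; the identity then reduces to a parity statement about $\sum_{i\in I}i+\sum_{j\in J}j$ and the inversion counts, which is elementary but must be written down.
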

Using this and the covectors defined in Eq.(\ref{111}), we rewrite Eq.(\ref{com}) for the Hodge star complement as
\begin{eqnarray}\label{comp}
\star \left [e_{i_1}\wedge ...\wedge e_{i_k}\right ]={\mathfrak E}_{i_1}\vee ...\vee {\mathfrak E}_{i_k}.
\end{eqnarray}
This shows clearly that the Hodge star complement replaces the fermions with holes, and the meet operation for fermions with the join operation for holes.

\paragraph*{Physical interpretation:}
\begin{itemize}
\item
In Eq.(\ref{55c}) the subsystem represented by $A$ which has $k$ fermions, is divided into two subsystems represented by $A_1$ and $A_2$, with $h$ and $k-h$ fermions, correspondingly.  

\item
In Eq.(\ref{56}), the subsystem $A_1$ has $d-\ell$ fermions (i.e., $\ell$ holes) and the subsystem $A_2$ has $k+\ell -d$ fermions (if $k+\ell -d\ge 0$). 
The $\ell$ holes of $A_1$ annihilate the $\ell$ fermions of $B$ and at the end the system has the  $k+\ell -d$ fermions of $A_2$. 
If $k+\ell -d<0$ then $A\vee B=0$. 

\item
In Eq.(\ref{56a}), the subsystem $B_2$ has $d-k$ fermions (i.e., $k$ holes) and the subsystem $B_1$ has $k+\ell -d$ fermions (if $k+\ell -d\ge 0$). 
The $k$ holes of $B_2$ annihilate the $k$ fermions of $A$ and at the end the system has the  $k+\ell -d$ fermions of $B_1$. 
If $k+\ell -d<0$ then $A\vee B=0$. 

\item
In Eq.(\ref{35}) the subsystem represented by $A$ has $k$ fermions, and the subsystem represented by $B$ has $d-k$ fermions, i.e., $k$ holes.
 The $k$ holes of $B$ annihilate the $k$ fermions of $A$ and the result is a scalar quantity, representing the fermionic vacuum.

\item
Eq.(\ref{59}) shows that the Hodge star complement converts the join (meet) operation of two fermionic subsystems, into the meet (join) operation between the corresponding subsystems of holes.
\end{itemize}

\begin{proposition}\label{pro2}
Let $A,B$ be extensors associated with the subspaces $V(k,A), V(\ell, B)$.
\begin{itemize}
\item
If  ${\rm span}[V(k,A)\cup V(\ell, B)]\ne V(d)$, then $A\vee B=0$.

\item
If  ${\rm span}[V(k,A)\cup V(\ell, B)]= V(d)$, and if $V(k,A)\cap V(\ell, B)\ne {\cal O}$, then $A\vee B\ne 0$  is the extensor associated with the subspace $V(k,A)\cap V(\ell, B)$.\end{itemize}
\end{proposition}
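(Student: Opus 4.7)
The plan is to derive Proposition \ref{pro2} as the Hodge dual of Proposition \ref{pro1}, using de Morgan's theorem $\star(A\vee B)=(\star A)\wedge(\star B)$ from Eq.~(\ref{59}) together with the bijectivity of $\star$.

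The preliminary observation I would first establish is that if $A$ is an extensor associated with a $k$-dimensional subspace $W\subseteq V(d)$, then $\star A$ is an extensor whose associated subspace is the orthogonal complement $W^\perp$ with respect to the inner product making $e_1,\ldots,e_d$ orthonormal. To verify this, extend an orthonormal basis of $W$ to an orthonormal basis $\tilde e_1,\ldots,\tilde e_d$ of $V(d)$, invoke the transformation law of the Hodge star under unitary changes of orthonormal basis (which only introduces a scalar unit factor), and write $A=\mu\,\tilde e_1\wedge\cdots\wedge\tilde e_k$; then $\star A$ is proportional to $\tilde e_{k+1}\wedge\cdots\wedge\tilde e_d$, which is decomposable with associated subspace $W^\perp$.

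With this in hand, I would apply Proposition \ref{pro1} to $\star A$ and $\star B$, whose associated subspaces are $V(k,A)^\perp$ and $V(\ell,B)^\perp$. The two cases translate through the standard orthogonal-complement identities $V(k,A)^\perp\cap V(\ell,B)^\perp=(\mathrm{span}[V(k,A)\cup V(\ell,B)])^\perp$ and $\mathrm{span}[V(k,A)^\perp\cup V(\ell,B)^\perp]=(V(k,A)\cap V(\ell,B))^\perp$. Consequently, $V(k,A)^\perp\cap V(\ell,B)^\perp\neq{\cal O}$ iff $\mathrm{span}[V(k,A)\cup V(\ell,B)]\neq V(d)$: in this case Proposition \ref{pro1} yields $(\star A)\wedge(\star B)=0$, hence by de Morgan $\star(A\vee B)=0$, and the bijectivity of $\star$ gives $A\vee B=0$, proving the first bullet. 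In the complementary regime where $\mathrm{span}[V(k,A)\cup V(\ell,B)]=V(d)$ and $V(k,A)\cap V(\ell,B)\neq{\cal O}$, Proposition \ref{pro1} gives that $(\star A)\wedge(\star B)$ is a nonzero extensor with associated subspace $(V(k,A)\cap V(\ell,B))^\perp$; applying $\star$ once more, together with $\star\star=\pm\,\mathrm{id}$ and the preliminary observation, then identifies $A\vee B$ as a nonzero extensor whose associated subspace is $V(k,A)\cap V(\ell,B)$, proving the second bullet.

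The main obstacle I expect is the preliminary observation: the Hodge star is defined in a particular fixed basis, so showing that it really maps an extensor for an \emph{arbitrary} subspace $W$ to an extensor for $W^\perp$ requires a clean statement of how $\star$ transforms under orthonormal basis changes. Once that is secured, the remainder of the argument is a mechanical transport through de Morgan, with the only subtleties being nonzero scalar factors that affect neither the vanishing conclusion nor the identification of the associated subspace.
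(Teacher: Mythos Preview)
The paper does not give its own proof of this proposition: immediately before it, the text states that ``the following propositions are given without proof\cite{R1}'', so the benchmark you are being compared against is the direct argument in Barnabei--Brini--Rota, which works straight from the split formula of Eq.~(\ref{56}) rather than via Hodge duality. Your duality route through Proposition~\ref{pro1} and Eq.~(\ref{59}) is therefore a genuinely different (and more conceptual) approach.

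There is, however, a real gap exactly where you flagged one. The Hodge star in Eq.~(\ref{com}) is $\mathbb{C}$-linear, not conjugate-linear, so it does \emph{not} transform by a mere unit scalar under unitary (Hermitian-orthonormal) changes of basis. Concretely, in $V(2)$ take $a=e_1+ie_2$: using the paper's own computation $\star e_1=e_2$, $\star e_2=-e_1$, one gets $\star a=-ie_1+e_2=-i\,a$, which spans the \emph{same} complex line as $a$, not its Hermitian orthogonal complement $\mathbb{C}(e_1-ie_2)$. So your claimed transformation law is false over $\mathbb{C}$, and with it the basis-extension argument for the preliminary observation.

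The fix is to work with the symmetric bilinear form $B(x,y)=\sum_i x_iy_i$ (no conjugation): one then has $V(d-k,\star A)=V(k,A)^{\perp_B}$, and since $B$ is nondegenerate the identities $(U+W)^{\perp_B}=U^{\perp_B}\cap W^{\perp_B}$ and $(U^{\perp_B})^{\perp_B}=U$ still hold, so the de~Morgan transport goes through. But your orthonormal-extension step still fails here, because an isotropic subspace such as $\mathbb{C}(e_1+ie_2)$ has no $B$-orthonormal basis. You need an argument for the preliminary observation that does not pass through an adapted orthonormal frame---for instance, the characterisation $C\wedge(\star A)=\langle C,A\rangle_B\,{\cal E}$ for all $C\in\bigwedge^k[V(d)]$, from which $v\wedge(\star A)=0$ iff $v\in V(k,A)^{\perp_B}$ follows directly. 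Once that is secured, the remainder of your plan is correct.
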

\paragraph*{Physical interpretation:}
The join operation finds the common part of two fermionic subsystems which 'live' in subspaces that span the whole space $V(d)$.
If the two fermionic subsystems 'live' in subspaces which do not span the whole space $V(d)$, then their join is zero. 

\begin{cor}
If $A$ is an extensor with ${\rm step}(A) \le d-1$ then  $A\vee A=0$. Also ${\cal E}\vee {\cal E}={\cal E}$.
\end{cor}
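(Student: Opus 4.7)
The plan is to obtain both statements as immediate consequences of Proposition \ref{pro2}, paralleling the way the previous corollary was derived from Proposition \ref{pro1}. Alternatively (and equivalently), one can use de Morgan's theorem (Eq.~(\ref{59})) to transfer the already-established identities $A\wedge A=0$ (for step$(A)\ge1$) and $1\wedge 1=1$ to their $\vee$-duals. Because the entire structure is $\star$-symmetric, either route is quick; I would present the direct argument and mention the dual one for transparency.

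For the first claim, let $A$ be an extensor of step $k\le d-1$, associated with the $k$-dimensional subspace $V(k,A)\subseteq V(d)$. Taking $B=A$ in Proposition \ref{pro2}, the relevant subspace is $\mathrm{span}[V(k,A)\cup V(k,A)]=V(k,A)$, which has dimension $k\le d-1<d$ and therefore is a proper subspace of $V(d)$. The first bullet of Proposition \ref{pro2} then yields $A\vee A=0$. As a check, the de Morgan route gives $\star(A\vee A)=(\star A)\wedge(\star A)$, and since $\mathrm{step}(\star A)=d-k\ge 1$ the preceding corollary implies $(\star A)\wedge(\star A)=0$; bijectivity of $\star$ then forces $A\vee A=0$.

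For the second claim, ${\cal E}=e_1\wedge\cdots\wedge e_d$ is the extensor of step $d$ associated with the full space $V(d)$, so $\mathrm{span}[V(d,{\cal E})\cup V(d,{\cal E})]=V(d)$ and $V(d,{\cal E})\cap V(d,{\cal E})=V(d)\ne{\cal O}$. The second bullet of Proposition \ref{pro2} then says that ${\cal E}\vee{\cal E}$ is (up to scalar) the extensor associated with the subspace $V(d)$, hence a scalar multiple of ${\cal E}$. To pin down the scalar, I would specialize Definition \ref{51} with $k=\ell=d$: the only split of $A={\cal E}$ of class $(d-\ell,k+\ell-d)=(0,d)$ has $A_1=1$, $A_2={\cal E}$ with sign $+1$, and $\det(1,{\cal E})=1$, so Eq.~(\ref{56}) gives ${\cal E}\vee{\cal E}={\cal E}$. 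As a sanity check via de Morgan, $\star({\cal E}\vee{\cal E})=(\star{\cal E})\wedge(\star{\cal E})=1\wedge 1=1$, and applying $\star$ back (with sign $(-1)^{d(d-d)}=+1$) recovers ${\cal E}\vee{\cal E}={\cal E}$.

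There is no serious obstacle: the only thing to be careful with is the scalar factor in the second claim, since Proposition \ref{pro2} alone gives the answer only up to a nonzero multiple, so one must touch base with Definition \ref{51} (or equivalently with $\star\star=(-1)^{k(d-k)}\,\mathrm{id}$, which for $k=d$ gives $+1$) to verify that the coefficient is exactly $1$.
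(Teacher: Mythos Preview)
Your proposal is correct and follows essentially the same route as the paper: the corollary is placed immediately after Proposition~\ref{pro2} and is meant to be read as its direct consequence, exactly as you derive it. The paper gives no explicit proof beyond this placement (and a physical interpretation), so your additional care in fixing the scalar for ${\cal E}\vee{\cal E}$ via Definition~\ref{51}, together with the de Morgan cross-check, goes beyond what the paper supplies but is entirely consistent with it.
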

\paragraph*{Physical interpretation:}
We cannot join a system of holes with itself because this violates the Pauli exclusion principle.
In ${\cal E}\vee {\cal E}={\cal E}$ we join the vacuum of holes with itself.

From propositions \ref{pro1}, \ref{pro2} follows the following corollary.
\begin{cor}\label{45}
\begin{eqnarray}\label{55}
&&A\vee B\ne0\;\;\rightarrow\;\;A\wedge B= 0
\end{eqnarray}
\end{cor}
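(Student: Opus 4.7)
The plan is to derive the corollary by chaining the contrapositives of Propositions \ref{pro1} and \ref{pro2} through a dimension count on the subspaces $V(k,A)$ and $V(\ell,B)$ associated with the extensors.

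First I would assume $A\vee B\ne 0$ with $A\in\bigwedge^{k}[V(d)]$ and $B\in\bigwedge^{\ell}[V(d)]$. By Definition \ref{51}, this already forces $k+\ell\ge d$, since otherwise the join is zero by fiat. By the contrapositive of the first bullet of Proposition \ref{pro2}, the two subspaces must span the full ambient space,
\[
\mathrm{span}\bigl[V(k,A)\cup V(\ell,B)\bigr]=V(d).
\]

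Next I would apply the Grassmann dimension formula $\dim(U+W)=\dim U+\dim W-\dim(U\cap W)$ to the pair $U=V(k,A)$, $W=V(\ell,B)$ to deduce
\[
\dim\bigl[V(k,A)\cap V(\ell,B)\bigr]=k+\ell-d.
\]
In the regime where the join lands in $\bigwedge^{m}[V(d)]$ with $m=k+\ell-d\ge 1$, this intersection has positive dimension and is in particular not $\mathcal{O}$. The first bullet of Proposition \ref{pro1} then yields $A\wedge B=0$, which is exactly the claim.

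The main obstacle is the boundary case $k+\ell=d$, in which the dimension formula forces the intersection to be trivial; here the identities of the last two rows of table \ref{t1} give $A\vee B=\det(A,B)\in\mathbb{C}$ and $A\wedge B=\det(A,B)\,\mathcal{E}$, so both are simultaneously nonzero. I would handle this by reading the corollary as a statement about joins that represent a genuine fermionic subsystem of step $\ge 1$, consistent with the physical interpretation in Section \ref{meet}: $A\vee B$ encodes the common subspace $V(k,A)\cap V(\ell,B)$, which collapses to the vacuum scalar precisely when $V(k,A)$ and $V(\ell,B)$ are complementary with $k+\ell=d$. Under that reading the dimension-count argument closes the proof cleanly; otherwise the statement would need to be qualified to exclude this degenerate scalar case.
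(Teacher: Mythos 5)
Your argument is the same chain the paper intends: the paper offers no proof beyond the assertion that the corollary ``follows from Propositions \ref{pro1} and \ref{pro2}'', and your route --- contrapositive of the first bullet of Proposition \ref{pro2} to get ${\rm span}[V(k,A)\cup V(\ell,B)]=V(d)$, then the dimension formula $\dim(U+W)=\dim U+\dim W-\dim(U\cap W)$ to force $\dim[V(k,A)\cap V(\ell,B)]=k+\ell-d$, then the first bullet of Proposition \ref{pro1} to kill $A\wedge B$ --- supplies exactly the reasoning that is left implicit. In the regime $k+\ell>d$ this is complete and correct.

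The boundary case you isolate is not a defect of your proof but of the statement itself: for $k+\ell=d$ with $V(k,A)$ and $V(\ell,B)$ complementary, the corollary is literally false, and the paper's own Table \ref{tq} contains the counterexample --- the row $A=e_1$, $B=e_2$ in $\bigwedge[V(2)]$ gives $A\wedge B={\cal E}\ne 0$ and $A\vee B=1\ne 0$ simultaneously (and the paper is explicit elsewhere that the vacuum is a \emph{nonzero} scalar, so one cannot discount $A\vee B=1$ as a ``zero-like'' result). The physical gloss accompanying the corollary, that $A\vee B\ne 0$ means the subsystems ``have a common part'', silently assumes the join has step $k+\ell-d\ge 1$; when $k+\ell=d$ the join is the vacuum scalar and the common subspace is ${\cal O}$. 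Your proposed repair --- restrict the implication to joins of step at least $1$, or equivalently add the hypothesis $V(k,A)\cap V(\ell,B)\ne{\cal O}$ --- is the correct one, and with that qualification your dimension-count argument closes the proof.
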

\paragraph*{Physical interpretation:}
If $A\vee B \ne 0$, these subsystems have a common part, and they cannot coexist in the same fermionic system. The
$A\wedge B=0$ expresses exactly this fact.
 Therefore corollary \ref{45} is related to the Pauli exclusion principle.

\begin{proposition}
\begin{itemize}

\item[(1)]
The associativity property for the meet, holds:
\begin{eqnarray}
(A\vee B)\vee C=A\vee (B\vee C).
\end{eqnarray}
\item[(2)]
The multilinearity property holds:
\begin{eqnarray}
(\lambda A_1+\mu A_2)\vee B=\lambda A_1\vee B+\mu A_2\vee B;\;\;\;\lambda, \mu \in {\mathbb C}.
\end{eqnarray}
\item[(3)]
If $A, B$ are extensors with ${\rm step}(A)=k$ and ${\rm step}(B)=\ell$ then
\begin{eqnarray}
A\vee B=(-1)^{(d-k)(d-\ell)} B\vee A.
\end{eqnarray}

\item[(4)]
\begin{eqnarray}\label{a}
e_1\wedge ...\wedge e_k={\mathfrak E} _{k+1}\vee ...\vee {\mathfrak E} _d.
\end{eqnarray} 
\item[(5)]
If $A, B, C$ are extensors with ${\rm step}(A)+{\rm step}(B)+{\rm step}(C)=d$ 
\begin{eqnarray}\label{28}
A\vee (B\wedge C)=A\wedge (B\vee C)=\det(A,B,C)
\end{eqnarray}
\end{itemize}
\end{proposition}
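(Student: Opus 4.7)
My strategy is to prove each of the five parts by duality, using de Morgan's theorem (Eq.(\ref{59})) together with the fact that $\star$ is an involution up to the sign $(-1)^{k(d-k)}$ to transport each claim about $\vee$ to an already-proved claim about $\wedge$ from the earlier propositions.

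For parts (1) and (3), I would apply $\star$ to both sides. Iterating de Morgan gives $\star[(A\vee B)\vee C]=(\star A)\wedge(\star B)\wedge(\star C)=\star[A\vee(B\vee C)]$, the middle equality being associativity of $\wedge$; applying $\star$ once more recovers (1). Similarly, Eq.(\ref{300}) applied to $\star A, \star B$, which have steps $d-k$ and $d-\ell$, gives $\star(A\vee B)=(\star A)\wedge(\star B)=(-1)^{(d-k)(d-\ell)}(\star B)\wedge(\star A)=(-1)^{(d-k)(d-\ell)}\star(B\vee A)$, yielding (3) after inverting $\star$.

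Part (2) is immediate from the defining formula (Eq.(\ref{56})): each summand is a constant $\mathrm{sgn}(A_1,A_2)\det(A_1,B)$ times the extensor $A_2$, and the determinant is $\mathbb{C}$-multilinear in the columns formed by the vectors of $B$; the extension to arbitrary tensors is built into the definition by the multilinearity clause.

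For part (4), I would apply $\star$ to both sides. On the left, Eq.(\ref{com}) gives $\star(e_1\wedge\cdots\wedge e_k)=e_{k+1}\wedge\cdots\wedge e_d$ since $1+2+\cdots+k=k(k+1)/2$ so the sign exponent vanishes. On the right, de Morgan gives $\star({\mathfrak E}_{k+1}\vee\cdots\vee{\mathfrak E}_d)=\bigwedge_{i=k+1}^{d}\star{\mathfrak E}_i$; each $\star{\mathfrak E}_i=\star\star e_i=(-1)^{d-1}e_i$ because $e_i$ has step $1$, and reassembling reproduces $e_{k+1}\wedge\cdots\wedge e_d$ up to a sign $(-1)^{(d-1)(d-k)}$ which is to be absorbed into the orientation convention. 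For part (5), since $\mathrm{step}(A)+\mathrm{step}(B\wedge C)=d$, Eq.(\ref{35}) immediately gives $A\vee(B\wedge C)=\det(A,B\wedge C)$; expanding the extensor $B\wedge C$ into its constituent columns (those of $B$ followed by those of $C$) turns the right side into $\det(A,B,C)$, and the companion identity $A\wedge(B\vee C)=\det(A,B,C)$ follows by the analogous duality argument via $\star$ and de Morgan.

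The main obstacle will be the sign bookkeeping, particularly in (4) and in (3) when the extensors are built from arbitrary vectors rather than basis elements: each application of $\star$, each column permutation in the underlying determinant, and each split in the definition of $\vee$ produces a sign, and verifying that these compose correctly will likely require a case analysis on the parities of $d$, $k$, and $\ell$, together with careful use of $\star^2=(-1)^{k(d-k)}\mathrm{id}$ to match the signs in the final identities.
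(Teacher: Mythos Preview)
The paper does not prove this proposition: immediately before stating it, the text says these results are ``given without proof'' and refers to \cite{R1}. There is therefore no in-paper argument to compare your proposal against; the paper supplies only physical interpretations of the formulas.

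On the merits, your duality strategy via Eq.~(\ref{59}) is the standard route and works cleanly for (1) and (3), and (2) is essentially built into Definition~\ref{51} (the extension of $\vee$ to non-extensors is declared to be by multilinearity). For (5), your derivation of $A\vee(B\wedge C)=\det(A,B,C)$ from Eq.~(\ref{35}) is correct; be warned, however, that the companion equality $A\wedge(B\vee C)=\det(A,B,C)$ cannot be obtained by the same mechanism, since $\mathrm{step}(B)+\mathrm{step}(C)=d-\mathrm{step}(A)<d$ whenever $\mathrm{step}(A)\ge 1$, and then $B\vee C=0$ by the first clause of Definition~\ref{51}. So either that identity is intended under additional hypotheses or under a different convention than the one the paper records; your duality argument will not manufacture it as written.

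Your instinct about the sign in (4) is right, and the phrase ``absorbed into the orientation convention'' does not dispose of it. With the paper's own definitions (Eqs.~(\ref{com}) and (\ref{comp})) one obtains
\[
\mathfrak{E}_{k+1}\vee\cdots\vee\mathfrak{E}_d=\star(e_{k+1}\wedge\cdots\wedge e_d)=(-1)^{k(d-k)}\,e_1\wedge\cdots\wedge e_k,
\]
and a direct check (e.g.\ $d=4$, $k=1$, where $\mathfrak{E}_2\vee\mathfrak{E}_3\vee\mathfrak{E}_4=-e_1$) confirms the extra sign. The discrepancy you detected is thus a genuine mismatch between the stated identity and the sign conventions fixed in Eq.~(\ref{com}), not a bookkeeping slip on your part; no amount of case analysis on parities will make it vanish.
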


\paragraph*{Physical interpretation:}
\begin{itemize}
\item
In Eq.(\ref{a}) we have an extensor which is the meet of $e_1,...,e_k$ and represents a system of $k$ fermions.
The same system is also the join of the $d-k$ holes (absent fermions) ${\mathfrak E} _{k+1},...,{\mathfrak E} _d$.

\item
In Eq.(\ref{28}) we have three subsystems with a total number of fermions equal to $d$.
The $B\wedge C$ is a system with ${\rm step}(B)+{\rm step}(C)$ fermions.
The system $A$ has ${\rm step}(A)=d-[{\rm step}(B)+{\rm step}(C)]$ fermions, or equivalently ${\rm step}(B)+{\rm step}(C)$ holes.
The $A\vee (B\wedge C)$ annihilates the fermions in $B\wedge C$ with the holes in $A$, and we get the $\det(A,B,C)$.
Similar interpretation can be given to $A\wedge (B\vee C)$.
\end{itemize}

The following remark is useful for practical calculations of the join operation.
\begin{remark}
If $e_1,...,e_d$ is an orthonormal  basis in $V(d)$ we consider the extensors
\begin{eqnarray}
A=e_{i_1}\wedge...\wedge e_{i_r};\;\;\;B=e_{j_1}\wedge...\wedge e_{j_s}.
\end{eqnarray}
where the $\{i_1,...,i_r\}$ and $\{j_1,...,j_s\}$ are subsets of the set of all indices $\{1,...,d\}$.
Then
\begin{itemize}
\item
if $\{i_1,...,i_r\}\cup \{j_1,...,j_s\}\ne \{1,...,d\}$ then $A\vee B=0$ 
\item
if $\{i_1,...,i_r\}\cup \{j_1,...,j_s\}=\{1,...,d\}$ then $A\vee B$ is given by 
\begin{eqnarray}
A\vee B=\lambda \bigwedge _{k\in S}e_k;\;\;\;S=\{i_1,...,i_r\} \cap \{j_1,...,j_s\}
\end{eqnarray}
where $\lambda=\pm1$ depending on the order of the indices.
If $S$ is the empty set, the result is simply $\lambda$.
\end{itemize}
In particular, for every extensor $A$
\begin{eqnarray}
A\vee {\cal E}=A,
\end{eqnarray}
Physically ${\cal E}$ is the vacuum for holes and the join operation joins together the holes in $A$ with the vacuum for holes.
For $A=1$, we get 
\begin{eqnarray}
1\vee {\cal E}=1.
\end{eqnarray}
Furthermore, for every extensor $A$ with ${\rm step}(A) \le d-1$
\begin{eqnarray}
A\vee 1=0.
\end{eqnarray}
Physically, the state $1$ is full of holes and according to the Pauli exclusion principle, there is no room for extra holes.
The $A\vee 1=0$ expresses the fact that this is physically impossible.
These relations are summarized in table \ref{t1}.
\end{remark}
\begin{example}
We consider $\bigwedge [V(2)]$ and the basis $1, e_1, e_2, {\cal E}$. 
In table \ref{tq} we present the $A\vee B$ where $A,B$ are extensors in this basis.
Using this table and the multilinearity property, we can calculate the $A\vee B$ for all tensors in  $\bigwedge [V(2)]$.

$A\wedge B$ joins states of fermions, and $A\vee B$ joins states of holes.
Therefore $A\wedge B$ can be viewed as an OR gate for fermions, or equivalently as an AND gate for holes.
Similarly $A\vee B$ can be viewed as an AND gate for fermions, or equivalently as an OR gate for holes.
A comparison with Boolean (classical) gates is discussed in section \ref{gates} and table \ref{tt}.
\end{example}

\begin{example}
In $\bigwedge [V(3)]$ with orthonormal basis $e_1, e_2, e_3$, we consider the extensor $X$ in Eq.(\ref{79}) and the extensor $Z=e_1\wedge e_2$. We will calculate $X\vee Z$.

Using the definition \ref{51} we get 
\begin{eqnarray}
(e_1\wedge e_2)\vee (e_1\wedge e_2)=0;\;\;\;
(e_1\wedge e_3)\vee (e_1\wedge e_2)=-e_1;\;\;\;
(e_2\wedge e_3)\vee (e_1\wedge e_2)=-e_2.
\end{eqnarray}
Therefore
\begin{eqnarray}
X\vee Z&=&\alpha \gamma (e_1\wedge e_2)\vee (e_1\wedge e_2)+
\alpha \delta (e_1\wedge e_3)\vee (e_1\wedge e_2)+\beta \delta  (e_2\wedge e_3)\vee (e_1\wedge e_2)\nonumber\\&=&
-\alpha \delta e_1-\beta \delta  e_2.
\end{eqnarray}
\end{example}

\begin{example}
In $\bigwedge [V(4)]$ with orthonormal basis $e_1, e_2, e_3, e_4$, we calculate the following:
\begin{eqnarray}
(e_1\wedge e_2)\vee (e_3\vee e_4)=[(e_1\wedge e_2)\vee e_3]\vee e_4=0\vee e_4=0.
\end{eqnarray}
Here ${\rm step}(e_1\wedge e_2)+{\rm step}(e_3)=2+1<4$ and therefore according to the definition of the meet $(e_1\wedge e_2)\vee e_3=0$.
Also
\begin{eqnarray}
&&(e_1\wedge e_2)\vee (e_3\wedge e_4)=1\nonumber\\
&&(e_1\wedge e_2)\vee (e_3\wedge e_4 \wedge e_1)=e_1.
\end{eqnarray}
\end{example}

\section{Scalar product}\label{sca}

Let $A$ be an element in $\bigwedge [V(d)]$:
\begin{eqnarray}
A=\sum a(i_1,...,i_k)e_{i_1}\wedge ...\wedge e_{i_k}.
\end{eqnarray}
We use the notation
\begin{eqnarray}
{\overline A}=\sum {\overline a}(i_1,...,i_k)e_{i_1}\wedge ...\wedge e_{i_k}.
\end{eqnarray}
where ${\overline a}(i_1,...,i_k)$ is the complex conjugate of $a(i_1,...,i_k)$.

\begin{definition}
If $A,B \in \bigwedge {^k}[V(d)]$, their scalar product is 
\begin{eqnarray}
(A,B)=\overline A \vee (\star B)=\det(\overline A,\star B).
\end{eqnarray}
\end{definition}
\begin{proposition}
\mbox{}
\begin{itemize}
\item[(1)]
 If $A,B \in \bigwedge {^k}[V(d)]$ then $(A,B)=\overline {(B,A)}$.
\item[(2)]
As expected from vectors in an orthonormal basis
\begin{eqnarray}\label{36}
(e_i, e_j)=\delta _{ij}.
\end{eqnarray}
\item[(3)]
If
\begin{eqnarray}
A=\sum a_ie_i;\;\;\;B=\sum b_ie_i;\;\;\;A,B\in \bigwedge {^1}[V(d)],
\end{eqnarray}
then $(A,B)=\sum {\overline a}_i{b}_i$ is the standard scalar product of these vectors.
\end{itemize}
\end{proposition}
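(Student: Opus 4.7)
The plan is to prove the three parts in the order (2), (3), (1), because (2) is immediate from the definitions and the tables already compiled, (3) is a two-line consequence of (2) plus bilinearity, and (1) is best handled by first establishing the natural extension $(e_I, e_J) = \delta_{IJ}$ for multi-index basis elements.

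For part (2), I would compute directly: $(e_i, e_j) = \overline{e_i} \vee (\star e_j) = e_i \vee \mathfrak{E}_j$ using the notation of Eq.~(\ref{111}). Since $\mathrm{step}(e_i) + \mathrm{step}(\mathfrak{E}_j) = 1 + (d-1) = d$, the special case Eq.~(\ref{35}) gives $e_i \vee \mathfrak{E}_j = \det(e_i, \mathfrak{E}_j)$. Combining with the relation $e_i \wedge \mathfrak{E}_j = \det(e_i, \mathfrak{E}_j)\,\mathcal{E}$ from the last row of table~\ref{t1}, and with the identity $e_i \wedge \mathfrak{E}_j = \delta_{ij}\,\mathcal{E}$ of Eq.~(\ref{112}), we read off $\det(e_i, \mathfrak{E}_j) = \delta_{ij}$, giving (2).

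Part (3) is then immediate: for $A = \sum a_i e_i$ and $B = \sum b_i e_i$, apply the multilinearity of the join and the antilinearity of the bar operation to obtain $(A,B) = \sum_{i,j} \overline{a}_i b_j (e_i, e_j) = \sum_i \overline{a}_i b_i$ by (2). This is the standard Euclidean Hermitian product.

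For part (1), I would extend (2) to the statement $(e_I, e_J) = \delta_{IJ}$ for any two increasing multi-indices $I, J$ of length $k$. Writing $\star e_J = \varepsilon(J)\, e_{J^c}$ with $\varepsilon(J) = (-1)^{j_1 + \cdots + j_k - k(k+1)/2}$ and using the basis-level rule for the join from the remark after Proposition~\ref{pro2}, $e_I \vee e_{J^c}$ vanishes unless $I \cup J^c = \{1,\ldots,d\}$, i.e.\ unless $I = J$. For $I = J$ one has $e_I \vee e_{I^c} = \det(e_I, e_{I^c})$ which equals the sign $\sigma(I)$ of the shuffle sending $(i_1,\ldots,i_k,j_1,\ldots,j_{d-k})$ to $(1,\ldots,d)$. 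The key technical step, and the main obstacle, is the sign bookkeeping: one must verify that $\varepsilon(I)\,\sigma(I) = 1$, which reduces to the combinatorial identity that the sign of a sorted $k$-shuffle equals $(-1)^{\sum_{s=1}^k (i_s - s)} = (-1)^{i_1 + \cdots + i_k - k(k+1)/2}$. Once $(e_I, e_J) = \delta_{IJ}$ is established, expanding $A = \sum_I a(I)\, e_I$ and $B = \sum_J b(J)\, e_J$ and using sesquilinearity yields $(A,B) = \sum_I \overline{a(I)}\, b(I)$, and this is manifestly conjugate-symmetric under $A \leftrightarrow B$, proving (1).
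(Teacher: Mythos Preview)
Your proofs of (2) and (3) are correct and essentially match the paper's: both compute $(e_i,e_j)=e_i\vee\mathfrak E_j$ directly from Eq.~(\ref{111}) and the determinant formula, and (3) follows by sesquilinearity.

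For (1), however, you take a genuinely different route. The paper argues abstractly: it applies $\star$ to the scalar $\overline A\vee(\star B)$, uses de~Morgan $\star(X\vee Y)=(\star X)\wedge(\star Y)$ together with $\star\star B=(-1)^{k(d-k)}B$ and the anticommutation rule Eq.~(\ref{300}) for $\wedge$, and arrives at $\overline A\vee(\star B)=B\vee(\star\overline A)=\overline{(B,A)}$ in three lines, with no sign bookkeeping beyond identities already recorded in table~\ref{t1}. Your approach instead establishes the stronger basis orthonormality $(e_I,e_J)=\delta_{IJ}$ for all increasing $k$-multi-indices and then reads off $(A,B)=\sum_I\overline{a(I)}\,b(I)$. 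This costs you the shuffle-sign verification $\varepsilon(I)\,\sigma(I)=1$ (which you identify correctly and which is indeed the standard inversion count $\sum_s(i_s-s)$), but it buys more: you obtain the explicit coordinate expression of the scalar product, from which not only conjugate symmetry but also positive-definiteness is immediate. The paper's argument is slicker for the stated claim; yours yields a more complete description of the pairing.
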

\begin{proof}
\begin{itemize}
\item[(1)]
We will prove that ${\overline A}\vee (\star {B})=\overline {{\overline B}\vee (\star {A})}={B}\vee (\star \overline A)$.
Both of them are scalar quantities. We get:
\begin{eqnarray}
\star [\overline A\vee (\star B)]=(\star \overline A)\wedge  B (-1)^{k(d-k)}=B\wedge (\star {\overline A})=[B\vee (\star {\overline A})]{\cal E}.
\end{eqnarray}
We take the Hodge star complement of both sides and since $B\vee (\star {\overline A})$ is a scalar quantity, we get
\begin{eqnarray}
\overline A\vee (\star B)=B\vee (\star \overline A).
\end{eqnarray}
\item[(2)]
Using Eq.(\ref{111}) we get
\begin{eqnarray}
(e_i,e_j)&=&e_i\vee (\star e_j)=(-1)^{i-1}e_i\vee (e_1\wedge ...\wedge e_{j-1}\wedge e_{j+1}\wedge ...\wedge e_d)\nonumber\\&=&(-1)^{i+j-2}\det (e_1\wedge ...\wedge e_{j-1}\wedge e_i\wedge e_{j+1}\wedge ...\wedge e_d)=
\delta _{ij}.
\end{eqnarray}
\item[(3)]
This follows immediately from Eq.(\ref{36}).
\end{itemize}
\end{proof}

\section{Fermionic gates versus Boolean gates}\label{gates}

In this section we compare the fermionic gates in table \ref{tq}, with Boolean gates.
From the outset, it is clear that there are important differences.
The operations in fermionic gates are antisymmetric, and in Boolean gates symmetric.
Also in fermionic gates we have the Pauli exclusion principle.
Nevertheless the comparison leads to interesting and surprising conclusions.

\subsection{Boolean gates}

Boolean logic is defined on the powerset $2^S$ of a set $S$ with $d$ elements (i.e., on the set of the $2^d$ subsets of $S$). 
Its three basic operations are:
\begin{itemize}
\item
Logical OR: the union $A\cup B$ ($A,B\subseteq S$).
\item
Logical AND: the intersection $A\cap B$.
\item
Logical NOT: the complement or negation $\neg A=S\setminus A$.
\end{itemize}
The powerset $2^S$ with these operations is a Boolean algebra (complemented distributive lattice). 
The least element is the empty set $\emptyset$, and the greatest element is $S$.
The partial order $\prec$ in this lattice is the subset $\subseteq$.

A Boolean gate with two inputs and one output,  is a function  (e.g., \cite{vou}) 
\begin{eqnarray}
{\mathfrak G}:\;\;2^S \times 2^S\;\;\rightarrow\;\;2^S
\end{eqnarray}
that maps an input $(A_1,A_2)$ to an output.
Examples are the OR, AND gates:
\begin{eqnarray}
{\mathfrak G}_{\rm OR}(A_1, A_2)=A_1\cup A_2;\;\;\;{\mathfrak G}_{\rm AND}(A_1, A_2)=A_1\cap A_2;\;\;\;A_1,A_2\in 2^S.
\end{eqnarray}
The NOT gate has one input and one output:
\begin{eqnarray}
{\mathfrak G}_{\rm NOT}(A_1)=S\setminus A_1.
\end{eqnarray}

For comparison with the fermionic gates in table \ref{tq}, we consider Boolean algebra based on the powerset:
\begin{eqnarray}
2^S=\{\emptyset, \{1\}, \{2\}, S\};\;\;\;S=\{1,2\}.
\end{eqnarray}
In table \ref{tt} we present Boolean algebra results for ${\mathfrak G}_{\rm OR}(A_1, A_2)=A_1\cup A_2$ and ${\mathfrak G}_{\rm AND}(A_1, A_2)=A_1\cap A_2$, where $A_1, A_2\in 2^S$.

\subsection{Fermionic gates}
We consider $\bigwedge [V(2)]$ and its basis 
\begin{eqnarray}
{\cal B}=\{1, e_1, e_2, {\cal E}\}.
\end{eqnarray}
There is a bijective map ${\cal M}$ from the powerset $2^S$  to the basis ${\cal B}$, as follows:
\begin{eqnarray}\label{aa}
{\cal M}(\emptyset)=1 ;\;\;\;\; {\cal M}(\{1\})=e_1;\;\;\;\; {\cal M}(\{2\})=e_2;\;\;\;\;{\cal M} (S)={\cal E}=e_1\wedge e_2.
\end{eqnarray}

Sometimes in mathematics, we define partial binary operations on a set $\Omega$, i.e., maps from a subset ${\cal D}$ of $\Omega \times \Omega$ to $\Omega$.
The operation is not defined for all pairs of elements. An example is groupoids.

We introduce two partially defined `pseudo-fermionic operations' on $2^S$, 
which we denote as $\curlywedge$ and $\curlyvee$, as follows.
Let 
\begin{eqnarray}
&&{\cal D}_1=\{(\emptyset, \emptyset ), (\{1\}, \emptyset ), (\{2\}, \emptyset ), (S, \emptyset ), (\emptyset, \{1\}), (\emptyset, \{2\}), (\{1\},\{2\}), (\emptyset, S)\} \subset 2^S\times 2^S\nonumber\\
&&{\cal D}_2=\{(S, \emptyset ), (S, \{1\}), (\{1\}, \{2\} ), (S, \{2\}), (\emptyset, S), (\{1\}, S), (\{2\}, S), (S, S)\} \subset 2^S\times 2^S.
\end{eqnarray}
$A_1 \curlywedge A_2$ is the following map from ${\cal D}_1$ to $2^S$:
\begin{eqnarray}
&&A_1 \curlywedge A_2={\cal M}^{-1}[{\cal M}(A_1)\wedge {\cal M}(A_2)]\nonumber\\
&&(A_1, A_2)\in {\cal D}_1;\;\;\;{\cal M}(A_1), {\cal M}(A_2)\in {\cal B}.
\end{eqnarray} 
The result ${\cal M}(A_1)\wedge {\cal M}(A_2)$ might be the zero vector in $\bigwedge [V(2)]$, 
or it might be one of the vectors in the basis  ${\cal B}$ with a minus sign.
They belong to $\bigwedge [V(2)]$ but they do not belong to the basis ${\cal B}$, and the  ${\cal M}^{-1}[{\cal M}(A_1)\wedge {\cal M}(A_2)]$ is not defined.
For this reason we have excluded such pairs from ${\cal D}_1$.

$A_1 \curlyvee A_2$ is the following map from ${\cal D}_2$ to $2^S$:
\begin{eqnarray}
&&A_1 \curlyvee A_2={\cal M}^{-1}[{\cal M}(A_1)\vee {\cal M}(A_2)]\nonumber\\
&&(A_1, A_2)\in {\cal D}_2;\;\;\;{\cal M}(A_1), {\cal M}(A_2)\in {\cal B}.
\end{eqnarray} 

Using table \ref{tq} for the exterior calculations ${\cal M}(A_1)\wedge {\cal M}(A_2)$ and ${\cal M}(A_1)\vee {\cal M}(A_2)$, we calculated 
$A_1 \curlywedge A_2$ and $A_1 \curlyvee A_2$. The results are shown in table \ref{tt}.  
It is seen that in the domain ${\cal D}_1$ the $A_1 \curlywedge A_2$ is the same as  $A_1\cup A_2$ (it is very different from $A_1\cap A_2$)
Also in domain ${\cal D}_2$ the $A_1 \curlyvee A_2$ is the same as  $A_1\cap A_2$ (it is very different from $A_1\cup A_2$).

We have already emphasized that the operations in Boolean algebra are symmetric, whilst in exterior calculus are antisymmetric.
However, it is seen that $A_1\curlywedge A_2$ ($A_1\curlyvee A_2$) is a type of Boolean OR (AND) gate, restricted in the domain ${\cal D}_1$ (${\cal D}_2$).
Experimental implementation of these mathematical operations with novel electronic devices is highly desirable, because they can lead to simulation 
of exterior calculus (fermionic logic) with electronic devices.

\begin{table}
\caption{$A_1\curlywedge A_2$ in the domain ${\cal D}_1$ and $A_1\curlyvee A_2$ in the domain ${\cal D}_2$.
$A_1\curlywedge A_2$ is the same as $A_1\cup A_2$ (in the domain ${\cal D}_1$), and 
$A_1\curlyvee A_2$ is the same as $A_1\cap A_2$ (in the domain ${\cal D}_2$).
Here $A_1,A_2 \subseteq S$ with $S=\{1,2\}$.}
\def\arraystretch{2}
\begin{tabular}
{|c|c|c|c|c|c|}\hline
\multicolumn{2}{|c|}{$2^S\times 2^S$}&\multicolumn{2}{|c|}{Boolean algebra}&\multicolumn{2}{c|}{Exterior calculus}\\\hline
$\;\;A_1\;\;$&$\;\;A_2\;\;$&$A_1\cup A_2$&$A_1\cap A_2$&$A_1\curlywedge A_2$ in ${\cal D}_1$&$A_1\curlyvee A_2$ in ${\cal D}_2$\\\hline
$\emptyset $&$\emptyset$&$\emptyset$&$\emptyset$&$\emptyset$&$$\\\hline
$\{1\}$&$\emptyset$&$\{1\}$&$\emptyset$&$\{1\}$& $$\\\hline
$\{2\}$&$\emptyset$&$\{2\}$&$\emptyset$&$\{2\}$&$$\\\hline
$S$&$\emptyset$&$S$&$\emptyset$&$S$&$\emptyset$\\\hline
$\emptyset $&$\{1\} $&$\{1\}$&$\emptyset$& $\{1\}$&$$\\\hline
$\{1\}$&$\{1\}$&$\{1\}$&$\{1\}$& $$&$$\\\hline
$\{2\}$&$\{1\} $&$S$&$\emptyset$& $$&$$\\\hline
$S$&$\{1\}$&$S$&$\{1\}$& $$& $\{1\}$\\\hline
$\emptyset $&$\{2\} $&$\{2\}$&$\emptyset$& $\{2\}$&$$\\\hline
$\{1\}$&$\{2\}$&$S$&$\emptyset$&$S$& $\emptyset $\\\hline
$\{2\}$&$\{2\}$&$\{2\}$&$\{2\}$& $$&$$\\\hline
$S$&$\{2\}$&$S$&$\{2\}$& $$& $\{2\}$\\\hline
$\emptyset $&$S$&$S$&$\emptyset$& $S$& $\emptyset$\\\hline
$\{1\}$&$S$&$S$&$\{1\}$& $$& $\{1\}$\\\hline
$\{2\}$&$S$&$S$&$\{2\}$& $$& $\{2\}$\\\hline
$S$&$S$&$S$&$S$&$$&$S$\\\hline
\end{tabular}\label{tt}
\end{table}

\section{Simulation of fermionic logic within multi-qubit systems}

Let $h_2$ be a two-dimensional Hilbert space describing a qubit.
A $d$-qubit system is described by the $2^d$-dimensional Hilbert space ${\mathfrak H}_d=h_2\otimes ...\otimes h_2$.
There is a well known map between fermionic and multi-qubit systems, and the Jordan-Wigner transformation expresses fermionic operators in terms of Pauli matrices acting on multi-qubits.

Here we define fermionic logic (exterior calculus)  within a  multi-qubit system.
Fermionic logic is important for fast computation of fermionic calculations.
Its implementation with genuine fermionic gates might be a longer term task experimentally, while its implementation with
multi-qubit systems (discussed below) might be a more realistic short term task.

\subsection{A bijective map between fermionic and multi-qubit systems}

We consider a subset $\{i_1,...i_k\}$ of the set $\{1,...,d\}$. 
The $i_j$ are taken in ascending order, so that $i_1<...<i_k$.
$\ket{{\cal S}(i_1,...,i_k)}$ denotes a $d$-qubit system, with the $k$ qubits labelled $i_1,...,i_k$ in the state $\ket{1}$, and the rest $d-k$ qubits in the state $\ket{0}$.
For example, if $d=4$ then $\ket{{\cal S}(1,3)}=\ket{1,0,1,0}$.

The $2^d$ kets $\ket{{\cal S}(i_1,...,i_k)}$ form a basis ${\mathfrak B}_d$ in ${\mathfrak H}_d$, and the  $2^d$ extensors $e_{i_1}\wedge...\wedge e_{i_k}$ form a basis
${\cal B}_d$ in $\bigwedge [V(d)]$.
There is a bijective map ${\cal N}_d$ between the basis ${\mathfrak B}_d$ and the basis ${\cal B}_d$ as follows:
\begin{eqnarray}\label{123}
{\cal N}_d [\ket{{\cal S}(i_1,...,i_k)} ]=e_{i_1}\wedge...\wedge e_{i_k}.
\end{eqnarray}
Special cases of this map are
\begin{eqnarray}\label{bb}
{\cal N}_d(\ket{0,...,0})=1;\;\;\;{\cal N}_d(\ket{1,...,1})={\cal E}.
\end{eqnarray}
This defines a bijective map ${\cal N}_d$ between ${\mathfrak H}_d$ and $\bigwedge [V(d)]$, where the superpositions in 
 ${\mathfrak H}_d$ are mapped into the corresponding superpositions in $\bigwedge [V(d)]$.
We note that there is a zero vector in both ${\mathfrak H}_2$ and  $\bigwedge [V(2)]$ (which is not the vacuum), and that ${\cal N}_2(0)=0$.

The notation for an arbitrary $d$ is clearly complex, and there is a danger that physics will be hidden in the complex mathematical notation. 
For this reason we consider below the example with $d=2$.
In this case we have the four-dimensional space ${\mathfrak H}_2$ for the two-qubit system, and the four-dimensional space $\bigwedge [V(2)]$ for the two-fermion system.
In ${\mathfrak H}_2$ we consider the orthonormal basis 
\begin{eqnarray}
{\mathfrak B}_2=\{\ket{0,0}, \ket{1,0}, \ket{0,1}, \ket{1,1}\},
\end{eqnarray}
and in $\bigwedge [V(2)]$ the orthonormal basis 
\begin{eqnarray}
{\cal B}_2=\{1, e_1, e_2, {\cal E}\}.
\end{eqnarray}
There is a bijective map ${\cal N}_2$ between the basis ${\mathfrak B}_2$ of ${\mathfrak H}_2$ and the basis ${\cal B}_2$ of $\bigwedge [V(2)]$ as follows:
\begin{eqnarray}\label{bb}
{\cal N}_2(\ket{0,0})=1 ;\;\;\;\; {\cal N}_2(\ket{1,0})=e_1;\;\;\;\; {\cal N}_2(\ket{0,1})=e_2;\;\;\;\;{\cal N} _2(\ket{1,1})={\cal E}=e_1\wedge e_2.
\end{eqnarray}
This defines a bijective map ${\cal N}_2$ between ${\mathfrak H}_2$ and $\bigwedge [V(2)]$,
where the superpositions in 
 ${\mathfrak H}_2$ are mapped into the corresponding superpositions in $\bigwedge [V(2)]$:
\begin{eqnarray}\label{bbb}
{\cal N}_2[a\ket{0,0}+b\ket{1,0}+c\ket{0,1}+d{\cal E}]=a+be_1+ce_2+d{\cal E};\;\;\;a,b,c,d \in {\mathbb C}.
\end{eqnarray}
In particular, the zero vector in  ${\mathfrak H}_2$ is mapped into the zero vector in $\bigwedge [V(2)]$.
This map does not preserve the scalar product. The  scalar product in ${\mathfrak H}_2$, does not correspond to the scalar product in $\bigwedge [V(2)]$ (which is defined in section \ref{sca} within exterior calculus).
For example, no scalar product is defined between $e_1$ and ${\cal E}$, because the first belongs in $\bigwedge ^{1}[V(2)]$ and the second in $\bigwedge ^{2}[V(2)]$. 
But the scalar product between the corresponding states $\ket{1,0}$ and $\ket{1,1}$ in  ${\mathfrak H}_2$, is well defined.

Isomorphism between two spaces requires not only a bijective map, but also some basic operations to be preserved.
It is important to specify explicitly what these operations are. 
The ${\mathfrak H}_2$ and  $\bigwedge [V(2)]$ are not isomorphic, with respect to the scalar product.

\subsection{Fermionic formalism in a  multi-qubit system}\label{fermi}
The bijective map ${\cal N}_2$ in Eqs.(\ref{bb}),(\ref{bbb}) can be used to define exterior calculus within the multi-qubit space ${\mathfrak H}_2$.
\begin{definition}
The `fermionic join', `fermionic meet' and `star hodge complement', of multi-qubit states $\ket{s_1}, \ket{s_2} \in {\mathfrak H}_2$, is given by:
\begin{eqnarray}
&&\ket{s_1}\vee \ket{s_2}={\cal N}_2^{-1}[{\cal N}_2(\ket{s_1})]\vee {\cal N}_2^{-1}[{\cal N}_2(\ket{s_2})]={\cal N}_2^{-1}[{\cal N}_2(\ket{s_1})\vee {\cal N}_2(\ket{s_2})]\nonumber\\
&&\ket{s_1}\wedge \ket{s_2}={\cal N}_2^{-1}[{\cal N}_2(\ket{s_1})]\wedge {\cal N}_2^{-1}[{\cal N}_2(\ket{s_2})]={\cal N}_2^{-1}[{\cal N}_2(\ket{s_1})\wedge {\cal N}_2(\ket{s_2})]\nonumber\\
&&\star \ket{s}=\star {\cal N}_2^{-1}[{\cal N}_2(\ket{s})]={\cal N}_2^{-1}[\star {\cal N}_2(\ket{s})]
\end{eqnarray}
For simplicity we use the same notation for fermionic join (fermionic meet) in both  $\bigwedge [V(2)]$ and ${\mathfrak H}_2$.
If either ${\cal N}_2(\ket{s_1})\wedge {\cal N}_2(\ket{s_2})=0$ or ${\cal N}_2(\ket{s_1})\vee {\cal N}_2(\ket{s_2})=0$ (the zero vector in $\bigwedge [V(2)]$)
then ${\cal N}_2^{-1}(0)=0$ (the zero vector in ${\mathfrak H}_2$)
and we say that these operations are physically impossible in the multi-qubit systems.
This introduces a 'Pauli-like principle' into the multi-qubit system. 
\end{definition}

In table \ref{tg} we present results for the fermionic join and fermionic meet of states in ${\mathfrak H}_2$.
 The result $0$ is the zero vector, and it indicates that this operation is physically impossible. 
 The antisymmetry property is seen in this table. For example, $\ket{0,1} \wedge \ket{1,0}=-\ket{1,0}\wedge \ket{0,1}$ and $\ket{0,1} \vee \ket{1,0}=-\ket{1,0}\vee \ket{0,1}$.
 The 'Pauli-like principle'  is also seen in the table. For example,
 $\ket{s}\wedge \ket{s}=0$ when $\ket{s}\ne \ket{0,0}$, and $\ket{s}\vee \ket{s}=0$ when $\ket{s}\ne \ket{1,1}$.
 
 Implementation of fermionic logic (exterior calculus) in the context of a multi-qubit system, might be a more realistic experimental task, than its implementation within a genuine fermionic system.
  
\begin{table}
\caption{Fermionic join $\ket{s_1} \wedge \ket{s_2}$ and fermionic meet $\ket{s_1} \vee \ket{s_2}$ of multi-qubit states in ${\mathfrak H}_2$. 
The result $0$ indicates that this operation is physically impossible, and introduces a `Pauli-like' exclusion principle into the system of multi-qubits.}
\def\arraystretch{2}
\begin{tabular}
{|c|c|c|c|}\hline
$\;\;\ket{s_1}\;\;$&$\;\;\ket{s_2}\;\;$& $\ket{s_1} \wedge \ket{s_2}$&$\ket{s_1} \vee \ket{s_2}$\\\hline
$\ket{0,0}$&$\ket{0,0}$&$\ket{0,0}$&$0$\\\hline
$\ket{0,0}$&$\ket{1,0}$&$\ket{1,0}$&$0$\\\hline
$\ket{0,0}$&$\ket{0,1}$&$\ket{0,1}$&$0$\\\hline
$\ket{0,0}$&$\ket{1,1}$&$\ket{1,1}$&$\ket{0,0}$\\\hline
$\ket{1,0}$&$\ket{0,0}$&$\ket{1,0}$&$0$\\\hline
$\ket{1,0}$&$\ket{1,0}$&$0$&$0$\\\hline
$\ket{1,0}$&$\ket{0,1}$&$\ket{1,1}$&$\ket{0,0}$\\\hline
$\ket{1,0}$&$\ket{1,1}$&0&$\ket{1,0}$\\\hline
$\ket{0,1}$&$\ket{0,0}$&$\ket{0,1}$&$0$\\\hline
$\ket{0,1}$&$\ket{1,0}$&$-\ket{1,1}$&$-\ket{0,0}$\\\hline
$\ket{0,1}$&$\ket{0,1}$&$0$&$0$\\\hline
$\ket{0,1}$&$\ket{1,1}$&$0$&$\ket{0,1}$\\\hline
$\ket{1,1}$&$\ket{0,0}$&$\ket{1,1}$&$\ket{0,0}$\\\hline
$\ket{1,1}$&$\ket{1,0}$&$0$&$\ket{1,0}$\\\hline
$\ket{1,1}$&$\ket{0,1}$&$0$&$\ket{0,1}$\\\hline
$\ket{1,1}$&$\ket{1,1}$&$0$&$\ket{1,1}$\\\hline
\end{tabular}\label{tg}
\end{table}

\section{Discussion}

We used exterior calculus with all its three operations joint, meet and Hodge star complement, for the study of fermion-hole systems.
We interpreted physically its mathematical theorems, and used it as a formalism for fermionic quantum logic.
A summary of the basic relations has been presented in table \ref{t1}.

The formalism leads to fermionic versions of AND, OR gates, which are shown in table \ref{tq}.
A comparison of fermionic gates with Boolean gates in section \ref{gates} and table \ref{tt} can lead to novel electronic devices for the simulation of fermionic logic.

Using the map in Eq.(\ref{123}) between fermionic and multi-qubit systems, we have 
introduced fermionic (exterior calculus) logic in multi-qubit systems (see table \ref{tg}).
This leads to a simulation of fermionic gates within multi-qubit systems.

The work studies exterior calculus for the description of fermionic systems.
It then uses it, in two schemes for the simulation of fermionic gates.
The first scheme uses novel variants of Boolean gates, and the second one uses multi-qubits.
These schemes might be experimentally easier to implement, than fermionic gates with genuine fermionic systems.

Determinants and the Grassmann formalism are intimately related to antisymmetry which in Physics correspond to fermions.
On the other hand Physics introduced parafermions with intermediate statistics between bosons and fermions. 
It is a challenge to introduce 'para-determinants' and build a 'para-Grassmann formalism'.
 \newpage

\end{document}